\documentclass[a4paper,UKenglish,cleveref, autoref, thm-restate]{lipics-v2021}

\usepackage{amsmath}
\usepackage{amssymb}
\usepackage{stmaryrd}
\usepackage{xspace}

\title{Scenes: A Meta-Logical Algebra for Mutable State} 

\author{Simon Foster}{University of York}{simon.foster@york.ac.uk}{https://orcid.org/0000-0002-9889-9514}{}

\author{Carlos Isasa}{University of Aarhus}{}{https://orcid.org/0000-0002-5035-8559}{}

\author{Christian Pardillo Laursen}{University of York}{}{https://orcid.org/0000-0001-7838-2764}{}

\authorrunning{Simon Foster, Carlos Isasa, Christian Pardillo Laursen}

\Copyright{Simon Foster, Carlos Isasa, Christian Pardillo Laursen}

\ccsdesc[500]{Theory of computation~Semantics and reasoning}

\keywords{Program verification, State modelling, Isabelle, Formal verification} 

\category{} 

\relatedversion{} 

\acknowledgements{We thank Brijesh Dongol and Georg Struth for their helpful feedback on an early version of this work.}

\nolinenumbers 

\hideLIPIcs

\newcommand{\lget}{\textit{\textsf{get}}\xspace}
\newcommand{\lput}{\textit{\textsf{put}}\xspace}

\newcommand{\lto}{\Longrightarrow}
\newcommand{\lone}{\mathbf{1}}
\newcommand{\lzero}{\mathbf{0}}
\newcommand{\lcomp}{\mathop{\fatsemi}}
\newcommand{\lindep}{\mathop{\bowtie}}
\newcommand{\lequiv}{\mathop{\approx}}
\newcommand{\lsublens}{\mathop{\preceq}}
\newcommand{\lplus}{\mathop{\oplus}}
\newcommand{\scompat}{\mathop{\#}}
\newcommand{\sceneq}[1]{\simeq_{#1}}

\newcommand{\unrest}{\mathop{\sharp}}

\newcommand{\view}{\mathcal{V}}
\newcommand{\src}{\mathcal{S}}

\newcommand{\mrg}[1][]{\textit{\textsf{mrg}}_{\scriptscriptstyle #1}}
\newcommand{\mrgon}[3]{#1 \lhd #2 \mathop{on} #3}
\newcommand{\defs}{\triangleq}

\newcommand{\num}{\mathbb{Z}}
\newcommand{\basis}{\mathcal{B}}
\newcommand{\real}{\mathbb{R}}

\newcommand{\isalogo}{\includegraphics[width=9pt]{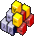}}
\newcommand{\isalink}[1]{\hfill \href{#1}{\isalogo}}

\begin{document}

\maketitle

\begin{abstract}
Modelling of mutable state spaces and precisely describing how variables are manipulated in a program is a fundamental problem in compositional verification. Though we can make use of the embedded abstract syntax of a program for such analysis, this runs contrary to the shallow-embedding approach, and hampers efficient proof automation. On the other hand, lenses and prisms provide an elegant algebraic foundation for modelling state, which provide sufficient structure to provide meta-logical program analysis, but without requiring a deep embedding. Nevertheless lenses, as complex algebraic objects, cannot easily be combined, complemented, or collected in sets. In this paper we contribute an accompanying algebraic structure called the \textit{scene}, which allows us to characterise the set of variables, or coordinates, in a state space. Scenes intuitively correspond to sets of lenses, but like lenses they are purely semantic algebraic objects. We demonstrate that scenes provide us with sufficient structure to characterise the lens-based meta-logical properties, like independence and equivalence. Moreover, we introduce the notion of a scene space, analogous to a vector space, which allows us to recover a set-like algebraic structure. Finally, we show how scenes allow us to characterise the free and bound variables of expressions and programs, without any need for syntax, and demonstrate their use for reasoning about programs by deriving reasoning principles for the parallel composition operator.
\end{abstract}

\newpage

\section{Introduction}
\label{sec:intro}
Modelling the mutable state of a program is a fundamental problem in verification~\cite{Back1998,Schirmer2009,Dongol19}. Though a naive model for state can be given through functions from variables to values (e.g. $N \rightarrow V$), in reality state spaces and stores are more complex and multi-faceted than this. For example, they can consist of concurrency, hierarchy, and addresses. In previous work~\cite{Foster2020-IsabelleUTP,HuertaFGSLH2024}, we have used lenses --- an algebraic structure that pairs a \lget and \lput function~\cite{Foster07} --- to provide a flexible and powerful foundation for modelling state. 

Lenses significantly narrow the gap between deep and shallow semantic embeddings~\cite{Foster2020-IsabelleUTP}. On the one hand, they do not require the formalisation of names and abstract syntax, which ensures that lens-based verification tools can maximise proof automation, through reuse of host logic facilities, as in a shallow embedding. On the other hand, lenses also provide sufficient structure on the store so that the semantics, rather than the abstract syntax, provides sufficient information for laws of programming, akin to a deep embedding. With them we can encode \textit{meta-logical operators}, that is those operators that characterise program properties, such as substitutions for variables, whether an expression depends on a variable or not, and whether a program modifies a variable. This has allowed us to develop several verification tools with lenses as the foundation, including concurrent systems~\cite{Foster17-ReactiveContracts,Foster2024ITrees}, hybrid systems~\cite{HuertaFGSLH2024}, probabilistic systems~\cite{Ye2024Probabilistic}, and robotic state machines~\cite{Foster2020-AUV,Yan2024ZMachines}.

However, whilst lenses have demonstrable utility, they lack the ability to describe collections of variables, and the typical set operators (membership, union, etc.). This is required, for example, so that we can compute the set of free variables in an expression, and so that we can manipulate a program's frame -- the set of variables that a program modifies~\cite{platzer_complete_2017}. Almost every formal verification approach requires that we can model sets of variables, in tools as diverse as Event-B~\cite{Abrial2010EventB} and KeYmaera X~\cite{KeYmaera}. The research question that this paper tackles is: how can we effectively model sets of program variables in the context of a shallow embedding?

Our answer to this question is \emph{scenes}, a simple yet powerful algebraic structure that allows us to semantically encode a set of variables. Scenes are essentially merge operators, which allow us to combine two program states together, according to a pre-defined division of the state variables. As with lenses, scenes do not depend on a particular state space model, but can be applied to any type that can be subdivided into variables. The intuition of scenes as sets requires that we can also formalise \emph{scene spaces}, which characterises foundational consistency properties for state spaces. Thus, scenes allow us to characterise a region of the store semantically, which in turn provides the basis for compositional reasoning by allowing us to precisely capture how a program modifies the state. We use scenes to derive results from Platzer's differential dynamic logic~\cite{Platzer18,platzer_complete_2017}, which are foundational to reasoning about concurrent hybrid programs.

Our theory does not require dependent types, and has been mechanised in the Isabelle proof assistant~\cite{Isabelle}. Every significant definition and theorem is accompanied by a link to our Git repository with the mechanised artefact \isalink. However, our results are also applicable to other proof assistants, such as Lean~\cite{moura2015lean} and Rocq~\cite{coq2024rocq}.

In Section~\ref{sec:lenses} we review preliminaries on lenses and state spaces. In Section~\ref{sec:scenes} we introduce the scene as an algebraic structure, and describe the fundamental operators. In Section~\ref{sec:scene_spaces} we introduce scene spaces, which characterise consistent state spaces, and give rise to a complete Boolean algebra. In Section~\ref{sec:programs} we apply all of the aforementioned theory in mechanising free variables and bound variables for expressions and programs, and derive key theorems from Platzer's work. In Section~\ref{sec:related} we consider related work and in Section~\ref{sec:concl} we conclude.

\section{Preliminaries: Lenses and State Spaces}
\label{sec:lenses}
Lenses are used to model the mutation of variables in the program state. A lens $X : \view \lto \src$ is a structure $(\view, \src, \lget, \lput)$, where $\lget_X : \src \to \view$ and $\lput_X : \view \to \src \to \src$, $\view$ is the view, and $\src$ is the source. Intuitively, a lens describes a $\view$-shaped region of a source type, $\src$. They can capture either the individual variables of type $\view$, or several variables. We constrain lenses to obey a number of algebraic laws, which describe how a single variable may be updated:
$$\lget~(\lput~v~s) ~=~ v \qquad \lput~v~(\lput~v'~s) ~=~ \lput~v~s \qquad \lput~(\lget~s)~s ~=~ s$$

\noindent For example, if $\src \defs \view_1 \times \view_2 \times \cdots \times \view_n$ then we can form a lens for each projection $\pi_i : \view_i \lto \src$ for $i \in \{1\cdots n\}$. In general, state spaces can be arbitrarily complex with hierarchy and nested data structures. Thus, lenses may be independent from each other, meaning they do not interfere, or one lens may be part of another. We characterise independence below:

\begin{definition} Lenses $X : V_1 \lto \src$ and $Y : V_2 \lto \src$ are independent, written $X \lindep Y$, provided that $\lput_X~v_1 \circ \lput_Y~v_2 = \lput_Y~v_2 \circ \lput_X~v_1$, for any $v_1 \in V_1$ and $v_2 \in V_2$.
\end{definition}

\noindent 
In the $n$-ary product space, $\pi_i$ and $\pi_j$ are independent $\pi_i \lindep \pi_j$ precisely when $i \neq j$. We can compose lenses $X \lcomp Y$ when the source of $X$ matches the view of $Y$~\cite{Foster2020-IsabelleUTP}.

We also define $\lone \defs (A, A, \lambda s.\,s, \lambda~v~s.\, s)$, which provides a left and right unit for $\lcomp$, and covers the whole state space. It is an example of a bijective lens, that is a lens that satisfies $\lput~s~(\lget~s') = s'$. Bijective lenses relate isomorphic representations of the same state space. 

We can now characterise the part-of relationship using a heterogeneous preorder on lenses:

\begin{definition} 
    Lens $X : V_1 \lto \src$ is a sublens $Y : V_2 \lto \src$, written $X \lsublens Y$, if there exists $Z : V_1 \lto V_2$, such that $X = Z \lcomp Y$.
\end{definition}

\noindent Intuitively, $X \lsublens Y$ when $Y$ encompasses all the variables in $X$. For example, $\lone$ encompasses all of the state, and so $X \lsublens \lone$. Moreover, we can define a heterogeneous equivalence: $X \lequiv Y \leftrightarrow (X \lsublens Y \land Y \lsublens X)$, for when two lenses encompass the same region. For example, if $X \approx \lone$, it means that $X$ covers all the state, and so is bijective.

In addition, we can compose two lenses in parallel using the lens sum operator:

\begin{definition}
    Given independent lenses $X : V_1 \lto \src$ and $Y : V_2 \lto \src$, the summation of two lenses is defined $X \lplus Y = (V_1 \times V_2, \src, \lambda s.\, (\lget_X~s, \lget_Y~s), \lambda (v_1, v_2).\, \lput_X~v_1 \circ \lput_Y~v_2)$.
\end{definition}

\noindent Operationally, $X \lplus Y$ corresponds to a simultaneous update to the coordinates characterised by $X$ and $Y$. However, it also allows us to characterise the region of the source covered by both $X$ and $Y$. Using this, we can now formally characterise a state space:

\begin{definition} \label{def:state-space}
    A state space is a family of lenses $\mathcal{A} = \{x_1, x_2, \cdots, x_n\}$, such that $x_i \lindep x_j$ for $i \neq j$, which form a span, that is $\left(\sum_{i \in \{1\cdots n\}} x_i\right) \lequiv \lone$.
\end{definition}
Each $x_i \in \mathcal{A}$ is called a \emph{basis lens}. Analogous to a vector space, combinations of basis lenses allow us to address any location. A state is then a set of maplets $[x_1 \leadsto v_1, x_2 \leadsto v_2, \cdots, x_n \leadsto v_n]$. However, we do yet not have the mathematical machinery necessary to formally characterise or construct such bases, which is the objective of sections~\ref{sec:scenes} and \ref{sec:scene_spaces}.

Each basis lens can also be considered as a \emph{coordinate} (or set thereof) that forms part of a coordinate system. The summation of the lenses in $\mathcal{A}$ yields a bijective lens that is analogous to a \emph{coordinate chart}~\cite{lee2012introduction}. Specifically, $\mathcal{X} = (\sum_{i \in \{1\cdots n\}} x_i)$ is a bijective lens of type $(\prod_{i \in \{1\cdots n\}} V_i) \lto \src$. Since $\mathcal{A}$ forms a span,  $\lget_\mathcal{X}$ transforms $\src$ into a flat product space, similar to how a coordinate chart $\varphi : U \rightarrow \mathbb{R}^n$ transforms a manifold into a Euclidean space. 

As an example, consider a type for Cartesian coordinates in a 2-dimensional plane \textit{coord} (i.e. $\real \times \real$). We define $xc, yc : \real \lto \textit{coord}$, which are independent ($xc \lindep yc$), and form a span ($xc \lplus yc \lequiv \lone$). They form a state space and a consistent coordinate system. We can also consider an alternative state space on the \textit{coord} type, as shown below. 
\begin{example}[Polar Coordinates Lenses] \label{ex:polarlens} \isalink{https://github.com/isabelle-utp/Optics/blob/f48265c6c05a4c7dc6951ecc29815369b8ae9377/Coordinate_Lenses_Example.thy}

$$\begin{array}{rcllrcl}
\textit{radius-of}(x, y)  &\defs& sqrt (x^2 + y^2)  &~~& 
\textit{angle-of}(x, y)   &\defs& \textit{arctan}(y / x)
 \\
\textit{xc-of}(r, \theta) &\defs& r \cdot cos(\theta) &~~&
\textit{yc-of}(r, \theta) &\defs& r \cdot sin(\theta)
\end{array}$$

\vspace{-4ex}

\begin{align*}
\textit{radius} &\defs (\textit{radius-of}, (\lambda (x, y)~r.\, (\textit{xc-of} (r, \textit{angle-of} (x, y)), \textit{xc-of} (r, \textit{angle-of} (x, y))))) \\
\textit{angle} &\defs (\textit{angle-of}, (\lambda (x, y)~\theta.\, (\textit{xc-of} (\textit{radius-of} (x, y), \theta), \textit{yc-of} (\textit{radius-of} (x, y), \theta))))
\end{align*}
\end{example}
We consider only strictly positive coordinates in the first quadrant 
by defining $pcoord \defs \real_{>0} \times \real_{>0}$, and define lenses $\textit{radius} : \real_{>0} \lto \textit{pcoord}$ and $\textit{angle}: \mathbb{A} \lto \textit{pcoord}$. These convert Cartesian coordinates into polar coordinates using trigonemtric identities. These lenses obey the three lens laws\footnote{This primarily follows because we have removed the origin.}. Moreover, $\textit{radius} \lindep \textit{angle}$, since we can modify either coordinate without affecting the other, and also $\textit{radius} \lplus \textit{angle} \lequiv \lone$, confirming that every Cartesian coordinate has a polar coordinate. Of course, these two coordinate systems are not compatible: we cannot switch arbitrarily between them without corrupting the state space. 

We have seen that lenses allow us to characterise a state space. However, they are complex algebraic objects to deal with, due to the presence of two sorts: the view $\view$ and the source $\src$. The preorder ($\lsublens$) and equality relation ($\lequiv$) are both heterogeneous, and so equational rewriting cannot be automated in proofs assistants like Isabelle. Moreover, though we can describe the family $\mathcal{A}$, in reality we cannot collect a set of lenses in a set within a typed logic, since every element of a set must have the same type. Therefore, in the next section we introduce a related algebraic object that will allow us to overcome these issues.

\section{Scenes}
\label{sec:scenes}
In this section we introduce a novel algebraic structure called a \emph{scene}. In contrast to lenses, scenes allow us to formally model sets of variables or coordinates, in a given state space.


\begin{definition} \label{def:scene} A scene is a structure $A \defs (\src, \mrg)$ where $\src \neq \emptyset$ and $\mrg : \src \to \src \to \src$ is a binary operator
  called the merge function. For any $x, y, z : \src$, it obeys the following laws\footnote{In Isabelle/HOL, these are called ``idempotent scenes'', but for this paper this distinction is irrelevant.}:
  \begin{align*}
    \mrg~(\mrg~x~y)~z ~=~ \mrg~x~z \qquad \mrg~x~(\mrg~y~z) ~=~ \mrg~x~z \qquad \mrg~x~x ~=~ x
  \end{align*}
  We write $Scene(\src)$ for the set of scenes over $\src$, that is, a family of merge functions over $\src$. We write $\mrg[A]$ and $\mrg[B]$ to distinguish the merge functions from several scenes. \isalink{https://github.com/isabelle-utp/Optics/blob/f48265c6c05a4c7dc6951ecc29815369b8ae9377/Scenes.thy\#L19}
\end{definition}


\noindent Intuitively, $\mrg$ merges two states, analogous to merge operators in programming languages. Given a state $[x_1 \leadsto v_1, x_2 \leadsto v_2, \cdots ]$ then a merge function $\mrg~s_1~s_2$ replaces the value of a subset of the variables in $s_1$ with those in $s_2$. 
Scenes therefore semantically describe a subset of the state variables without capturing the view, which yields a simpler algebraic structure.

We give an example family of scenes based on total functions.

\begin{definition}[Function Domain Scene] We assume non-empty sets $A$ and $B$, which provide the domain and range types for a
  function, and a set $D \subseteq A$, and then define \isalink{https://github.com/isabelle-utp/Optics/blob/f48265c6c05a4c7dc6951ecc29815369b8ae9377/Scenes.thy\#L638}
  $$\textsf{fds}^A_B(D) \defs \left(A \to B, \lambda f~g.\, \lambda x.\, 
    \begin{cases}
        f(x) & \text{if}~ x \in D \\
        g(x) & \text{otherwise}
    \end{cases}\right)$$
  This scene overrides $f$ with $g$ when an input value $x$ is a member of $D$. It is easy to show that the
  overriding function obeys the three laws in Definition~\ref{def:scene}. \qed
\end{definition}
Consider an example instantiation, where $A = \{a,b,c\}$, $B = \num$, and $D = \{a, b\}$. Then, we have $\mrg~\{a \mapsto 1, b \mapsto 2, c \mapsto 3\}~\{a \mapsto 7, b \mapsto 8, c \mapsto 9\} = \{a \mapsto 7, b \mapsto 8, c \mapsto 3\}$, which takes the values of the keys in $D$ from the first function, and the remaining values from the second.

Every scene induces an equivalence relation on states:

\begin{definition}[State Equivalence] $s_1 \sceneq{F} s_2 \leftrightarrow (\mrg[F]~s_1~s_2) = s_1$ \isalink{https://github.com/isabelle-utp/Optics/blob/f48265c6c05a4c7dc6951ecc29815369b8ae9377/Scenes.thy\#L97}
\end{definition}
States $s_1$ and $s_2$ are equivalent modulo $F$ when merging the two states yields $s_1$. This means that the region of the state space described by $F$ has the same valuation in both $s_1$ and $s_2$. For any given scene $F$, we show that $(\sceneq{F})$ is an equivalence relation on $\src$: it is reflexive, symmetric, and transitive. We sometimes also use the notation $s_1 \sceneq{} s_2 \mathop{\textit{on}} F$ for $s_1 \sceneq{F} s_2$. This relation has a central role in program verification, since it allows us to reason about the \textit{frame} of an expression or program: how much of these state is changed.

Every lens $X : V \lto \src$ has an underlying scene on $\src$ that captures the region of the state space viewed by $X$. The scene of a (total) lens $X$ is defined as follows: 
\begin{definition}[Lens Scene]
$X^\sim \defs (\lambda s_1~s_2.\, \lput_X~s_1~(\lget_X~s_2))$ \isalink{https://github.com/isabelle-utp/Optics/blob/f48265c6c05a4c7dc6951ecc29815369b8ae9377/Scenes.thy\#L452}
\end{definition}
This updates $s_1$ by copying the data located at $X$ from the corresponding data in $s_2$. It corresponds to the region of $\src$ characterised by $X$. Any lens satisfying the three lens laws yields a scene satisfying the three scene laws. We can therefore overload the lens operators (e.g. $\lindep$) to apply to both lenses and scenes. As for lenses, we can also define independence:

\begin{definition}[Independence] Scenes $F$ and $G$ over $\src$ are independent, written $F \lindep G$, provided that, for any $s_1, s_2, s_3 \in \src$, they satisfy $G~(F~s_1~s_2)~s_3 = F~(G~s_1~s_3)~s_2$. \isalink{https://github.com/isabelle-utp/Optics/blob/f48265c6c05a4c7dc6951ecc29815369b8ae9377/Scenes.thy\#L104}
\end{definition}
Here, two sets of updates are applied to $s_1$: $F$ is used to merge with updates from $s_2$ and $G$ is used to merge the updates from $s_3$. If these updates commute, then it means that there is no interference, and so they are independent. This definition also subsumes lens independence:

\begin{theorem} $X^\sim \lindep Y^\sim$ if and only if $\lput_X~v_1 \circ \lput_Y~v_2 = \lput_Y~v_2 \circ \lput_X~v_1$. \isalink{https://github.com/isabelle-utp/Optics/blob/f48265c6c05a4c7dc6951ecc29815369b8ae9377/Scenes.thy\#L488}
\end{theorem}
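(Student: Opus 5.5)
We prove the two directions separately, working directly from the definition of $X^\sim$ and the three lens laws. Unfolding scene independence for $X^\sim$ and $Y^\sim$, the condition $X^\sim \lindep Y^\sim$ says that for all $s_1, s_2, s_3 \in \src$,
\[
\lput_Y~(\lput_X~(\lget_X~s_2)~s_1)~(\lget_Y~s_3) = \lput_X~(\lput_Y~(\lget_Y~s_3)~s_1)~(\lget_X~s_2).
\]
Here we read $\lput_X~s_1~v$ in the lens-scene definition as ``put $v$ into $s_1$''; the argument order is immaterial to the proof. The key observation is that every view value is of the form $\lget$ of some source, because $\lget_X~(\lput_X~v_1~s) = v_1$ and $\src \neq \emptyset$.

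For the direction from lens commutativity to scene independence, we instantiate the commutation law $\lput_X~v_1 \circ \lput_Y~v_2 = \lput_Y~v_2 \circ \lput_X~v_1$ with $v_1 \defs \lget_X~s_2$ and $v_2 \defs \lget_Y~s_3$. We then apply both sides to $s_1$. This gives exactly the displayed equation, so this direction is immediate.

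For the converse, fix $v_1$, $v_2$ and a state $s$. Choose $s_2 \defs \lput_X~v_1~s$ and $s_3 \defs \lput_Y~v_2~s$; by the first lens law, $\lget_X~s_2 = v_1$ and $\lget_Y~s_3 = v_2$. Instantiating the scene-independence equation with $s_1 \defs s$, $s_2$ and $s_3$ then yields $\lput_Y~v_2~(\lput_X~v_1~s) = \lput_X~v_1~(\lput_Y~v_2~s)$. Function extensionality gives the required equality of composites.

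The only step requiring care is the surjectivity-of-$\lget$ trick in the converse, since scene independence only quantifies over values of the form $\lget~s$. The first lens law (put--get) resolves it, and non-emptiness of $\src$ guarantees a witness $s$. The remaining lens laws (put--put and get--put) are not needed; they only serve to ensure that $X^\sim$ is a scene at all. The whole argument should go through in Isabelle by unfolding the definitions and applying \texttt{metis} with the put--get law.
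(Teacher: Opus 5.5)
Your proof is correct and follows essentially the same route as the paper's mechanised proof. That proof likewise unfolds $X^\sim$ into lens override, gets the forward direction by instantiating put-commutation at $\lget_X~s_2$ and $\lget_Y~s_3$, and uses the put--get law to realise arbitrary $v_1, v_2$ as $\lget$ of suitable states for the converse. Two cosmetic points: your displayed equation mixes the two argument orders of $\lput$ (outer calls source-first, inner calls value-first), and the appeal to $\src \neq \emptyset$ is unnecessary, since $s$ is already fixed when you prove the pointwise equality.
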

We are therefore justified in substituting lens independence for scene independence. We can also prove an important theorem linking lens equivalence and scene equality:

\begin{theorem}[Lens equivalence is Scene equality] $X \lequiv Y \longleftrightarrow X^\sim = Y^\sim$ \isalink{https://github.com/isabelle-utp/Optics/blob/f48265c6c05a4c7dc6951ecc29815369b8ae9377/Scenes.thy\#L580} \end{theorem}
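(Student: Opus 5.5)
We prove the two directions separately. The key intermediate fact is that the sublens order is reflected in the lens scenes. Writing $\mrg[X]$ for the merge of $X^\sim$, that is $\mrg[X]~s_1~s_2 = \lput_X~(\lget_X~s_2)~s_1$, we aim for
$$X \lsublens Y \;\longrightarrow\; \mrg[Y]~(\mrg[X]~s_1~s_2)~s_2 = \mrg[Y]~s_1~s_2 \quad\text{and}\quad \mrg[X]~(\mrg[Y]~s_1~s_2)~s_2 = \mrg[Y]~s_1~s_2 .$$
Informally, once all of $Y$ has been copied from $s_2$, also copying the smaller region $X$ from $s_2$ changes nothing. The statement then follows by showing that mutual containment forces the two merges to agree.

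($\Rightarrow$) Suppose $X = Z \lcomp Y$ and $Y = W \lcomp X$, with $Z : V_1 \lto V_2$ and $W : V_2 \lto V_1$. Unfolding lens composition gives
$$\lget_X = \lget_Z \circ \lget_Y, \qquad \lput_X~v~s = \lput_Y~(\lput_Z~v~(\lget_Y~s))~s .$$
Then
$$\mrg[X]~s_1~s_2 = \lput_Y~(\lput_Z~(\lget_Z~(\lget_Y~s_2))~(\lget_Y~s_1))~s_1 .$$
We want this to equal $\lput_Y~(\lget_Y~s_2)~s_1$, which requires
$$\lput_Z~(\lget_Z~b)~a = b \quad\text{for } a = \lget_Y~s_1,\ b = \lget_Y~s_2 ,$$
that is, $Z$ must be bijective on the relevant views. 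We obtain this from the converse inclusion. Since
$$\lget_Y = \lget_W \circ \lget_X = \lget_W \circ \lget_Z \circ \lget_Y$$
and $\lget_Y$ is surjective (by $\lget_Y~(\lput_Y~v~s) = v$), we get $\lget_W \circ \lget_Z = id$. Similarly, from $\lput$ of the composite $W \lcomp Z \lcomp Y = Y$ together with the lens laws, we derive $\lput_{W\lcomp Z}~v~a = v$. The same reasoning in the other direction shows that $Z \lcomp W$ behaves as the identity lens on $V_1$. From this we get that $Z$ is bijective: $\lput_Z~(\lget_Z~b)~a = b$. Substituting this back gives $X^\sim = Y^\sim$ by function extensionality.

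($\Leftarrow$) Suppose $X^\sim = Y^\sim$. We construct $Z : V_1 \lto V_2$ explicitly, fixing an arbitrary $s_0 \in \src$ (nonempty, since $V_1$ is inhabited):
$$\lget_Z~b \defs \lget_X~(\lput_Y~b~s_0), \qquad \lput_Z~v~b \defs \lget_Y~(\lput_X~v~(\lput_Y~b~s_0)) .$$
The first obligation is to check the three lens laws for $Z$. Each reduces, via the lens laws of $X$ and $Y$, to the fact that $\lput_Y~b$ overwrites exactly the region that $\lput_X$ does. Concretely, from the scene equality we have
$$\lput_X~(\lget_X~t)~s = \lput_Y~(\lget_Y~t)~s ,$$
so in particular $\lget_X~t$ depends only on $\lget_Y~t$, and vice versa. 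The second obligation is $Z \lcomp Y = X$, which needs $\lget_X~s = \lget_X~(\lput_Y~(\lget_Y~s)~s_0)$. This holds because $\lput_Y~(\lget_Y~s)~s_0 = \mrg[Y]~s_0~s = \mrg[X]~s_0~s$, and $\lget_X$ of the latter is $\lget_X~s$. The corresponding $\lput$ equation is handled in the same way. Symmetrically we obtain $Y \lsublens X$.

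The main obstacle is the $\Leftarrow$ direction: defining $Z$ through the fixed witness $s_0$ and verifying that the result is independent of $s_0$ in the places where that matters. In particular, the put–put and get–put laws for $Z$ require repeatedly rewriting $\lput_Y~(\lget_Y~\cdot)$ into $\lput_X~(\lget_X~\cdot)$ via the scene equality. I would first derive the auxiliary lemma
$$\lget_X~s = \lget_X~t \;\leftrightarrow\; \lget_Y~s = \lget_Y~t ,$$
which follows from $\mrg[X]=\mrg[Y]$ and the get–put law. I would then discharge all remaining goals by this lemma and simplification.
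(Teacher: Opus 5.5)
Your proof is correct, but it takes a different route from the paper. The paper's mechanised proof derives this theorem from the order correspondence $X \lsublens Y \iff X^\sim \le Y^\sim$ (stated alongside the scene order) together with antisymmetry of $\le$. The chain is: $X \lequiv Y$ iff $X^\sim \le Y^\sim$ and $Y^\sim \le X^\sim$, iff $X^\sim = Y^\sim$.

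You bypass the scene order and work with equality directly. For ($\Rightarrow$) you show the factor lens $Z$ must be bijective. For ($\Leftarrow$) you build $Z$ from a default state $s_0$ and check the lens laws and $Z \lcomp Y = X$. I checked these obligations and they go through. Under the equality hypothesis, $\lput_Z~v~b = \lget_Y~(\lput_X~v~s_0)$ does not depend on $b$, so your $Z$ is itself a bijective lens and the put--put law is immediate.

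The paper's route is more economical, since it needs the order correspondence anyway and equivalence then follows in one line. Its hard half ($X^\sim \le Y^\sim \Rightarrow X \lsublens Y$) needs the same kind of factor-lens witness as yours, but under a weaker hypothesis, where the witness is no longer bijective and the lens laws are harder to verify. Your route is self-contained and uses the stronger hypothesis to keep every check elementary.

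Two small remarks. First, the two identities you announce at the start are never used, yet they already settle ($\Rightarrow$) in one line. $Y \lsublens X$ gives $\mrg[X]~(\mrg[Y]~s_1~s_2)~s_2 = \mrg[X]~s_1~s_2$ (your first identity with the roles swapped). $X \lsublens Y$ gives that the same left-hand side equals $\mrg[Y]~s_1~s_2$ (your second identity). If you keep the bijectivity detour instead, note that $\lget_W \circ \lget_Z = \mathit{id}$ already makes $\lget_Z$ injective. The law $\lget_Z~(\lput_Z~v~b) = v$ then yields $\lput_Z~(\lget_Z~b)~a = b$ at once, with no computation involving $W \lcomp Z$ or $Z \lcomp W$.

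Second, $\src \neq \emptyset$ does not follow from $V_1$ being inhabited. For instance, with $\src = \emptyset$, the lenses $X : \{0,1\} \lto \emptyset$ and $Y : \{*\} \lto \emptyset$ have equal scenes, yet no lens $\{0,1\} \lto \{*\}$ exists. Nonemptiness is a standing assumption of the paper, and you need it both for surjectivity of $\lget_Y$ and for choosing $s_0$.
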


\noindent Lens equivalence can by characterised by showing that the two underlying scenes are equal, using the standard homogeneous equality. This means that we can substitute the heterogeneous lens equivalence with a homogeneous scene equivalence, meaning the simplifier can be used to reason about lens equivalence. Similarly, we can define an order on scenes:

\begin{definition}[Scene Order] $F$ is a subscene of $G$, written $F \le G$, provided that $\mrg[F]~(\mrg[G]~s_1~s_2)~s_3 = \mrg[G]~s_1~(\mrg[F]~s_2~s_3)$ for any $s_1, s_2, s_3 \in \src$. \isalink{https://github.com/isabelle-utp/Optics/blob/f48265c6c05a4c7dc6951ecc29815369b8ae9377/Scenes.thy\#L331}
\end{definition}
The scene order is reflexive, transitive, and antisymmetric and thus forms a partial order on scenes. It also preserves state equivalence: if $s_1 \sceneq{G} s_2$, and $F \le G$, then also $s_1 \sceneq{F} s_2$. Moreover, we have $X \lsublens Y$ if and only if  $X^\sim \le Y^\sim$, which validates our order on scenes.

Scenes are intended to model not just individual lenses, but also sets of lenses. We therefore next define a set of Boolean algebra operators for scenes, including complementation.
\begin{definition}[Scene Boolean Lattice] \isalink{https://github.com/isabelle-utp/Optics/blob/f48265c6c05a4c7dc6951ecc29815369b8ae9377/Scenes.thy\#L144} $ $%
\vspace{1ex}

\centering
$\begin{array}{rlcrl}
\bot &~=~ (\lambda s_1~s_2.\, s_1)  &~~&   \top &~=~ (\lambda s_1~s_2.\, s_2) \\
F^\complement &~=~ (\lambda s_1~s_2.\, F~s_2~s_1) &~~& F \sqcup G &~=~ (\lambda s_1~s_2.\, G~(F~s_1~s_2)~s_2)
\end{array}$
\end{definition}
The $\bot$ scene is the merge function that ignores its second argument, and so takes none of $s_2$. Intuitively, $\bot$ is an empty set of lenses. Conversely, $\top$ ignores the first argument, and so takes all of the state from $s_2$. It therefore corresponds to the set of all consistent lenses. The complement operator, $F^\complement$, switches the two arguments of the merge function. It corresponds to the set complement, an operator that cannot easily be defined for lenses, due to the need to calculate the complement view~\cite{Hofmann2011SymmetricLenses}. The join operator, $F \sqcup G$, combines the two underlying merge functions. It first merges $s_1$ and $s_2$ using $F$, and then merges the resulting state with $s_2$ using $G$. We can also define a meet operator, $F \sqcap G \defs (F^\complement \sqcup G^\complement)^\complement$.

Intuitively, $\sqcup$ corresponds to the set union operator, and $\sqcap$ is intersection. These intuitions are partly justified and demonstrated in the following algebraic laws:

\begin{theorem}[Scene Laws] \isalink{https://github.com/isabelle-utp/Optics/blob/f48265c6c05a4c7dc6951ecc29815369b8ae9377/Scenes.thy\#L185}
$$
\top^\complement = \bot \qquad (F^\complement)^\complement = F \qquad \top \sqcup F = F \sqcup \top = \top  \qquad \bot \sqcup F = F \sqcup \bot = F \qquad F \sqcup F^\complement = \top
$$
$$
\bot \le F \qquad F \le \top \qquad F \lindep F^\complement \qquad F \lindep G^\complement \longleftrightarrow F \le G \qquad s_1 \sceneq{\bot} s_2
$$
\end{theorem}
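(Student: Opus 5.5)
Each law is an equation between merge functions or a pointwise condition on states, so the plan is to prove them by function extensionality: fix arbitrary $s_1, s_2, s_3 \in \src$, unfold the definitions of $\bot$, $\top$, $(\cdot)^\complement$, $\sqcup$, $\le$ and $\lindep$, and reduce both sides using only the three scene laws of Definition~\ref{def:scene}. Throughout, I write $F$ for $\mrg[F]$.

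The equational laws are direct calculations. $\top^\complement$ is $\lambda s_1\,s_2.\,\top~s_2~s_1 = s_1$, which is $\bot$. $(F^\complement)^\complement$ swaps the arguments twice. For $\top \sqcup F$ we get $F~(\top~s_1~s_2)~s_2 = F~s_2~s_2 = s_2$ by idempotence, and $F \sqcup \top$ reduces to $s_2$ immediately. For $\bot \sqcup F$ we get $F~s_1~s_2$, and $F \sqcup \bot$ reduces to $\bot~(F~s_1~s_2)~s_2 = F~s_1~s_2$. For $F \sqcup F^\complement$ we have $F^\complement~(F~s_1~s_2)~s_2 = F~s_2~(F~s_1~s_2)$, which equals $F~s_2~s_2 = s_2$ by the second scene law and then idempotence. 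For $s_1 \sceneq{\bot} s_2$, note that $\bot~s_1~s_2 = s_1$ by definition.

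For the order laws, $\bot \le F$ requires $\bot~(F~s_1~s_2)~s_3 = F~s_1~(\bot~s_2~s_3)$, and both sides equal $F~s_1~s_2$. Likewise $F \le \top$ requires $F~(\top~s_1~s_2)~s_3 = \top~s_1~(F~s_2~s_3)$, and both sides equal $F~s_2~s_3$. For $F \lindep F^\complement$, the definition asks for $F^\complement~(F~s_1~s_2)~s_3 = F~(F^\complement~s_1~s_3)~s_2$. The left side is $F~s_3~(F~s_1~s_2)$, which reduces to $F~s_3~s_2$ by the second law. The right side is $F~(F~s_3~s_1)~s_2$, which reduces to $F~s_3~s_2$ by the first law.

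The main step is $F \lindep G^\complement \leftrightarrow F \le G$. Unfolding gives $F \lindep G^\complement$ iff $G~s_3~(F~s_1~s_2) = F~(G~s_3~s_1)~s_2$ for all $s_1, s_2, s_3$. By contrast, $F \le G$ says $F~(G~s_1~s_2)~s_3 = G~s_1~(F~s_2~s_3)$ for all $s_1, s_2, s_3$. Renaming the variables via $s_3 \mapsto s_1$, $s_1 \mapsto s_2$, $s_2 \mapsto s_3$ turns the independence condition into exactly the order condition. So after unfolding, the equivalence is just this variable renaming under the universal quantifier. If the paper's independence convention were instead symmetric in a way that breaks this match, I would first prove that $\lindep$ is symmetric and then apply the same renaming. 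No scene laws should be needed for this step; the only real risk is bookkeeping the argument order correctly.
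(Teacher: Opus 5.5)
Your proposal is correct. Each law reduces, after pointwise unfolding of $\bot$, $\top$, $(\cdot)^\complement$, $\sqcup$, $\le$ and $\lindep$, to a definitional identity or to one or two uses of the overshadowing laws and idempotence, and your calculations check out, including the argument permutation that turns $F \lindep G^\complement$ into $F \le G$ with no scene laws. This is essentially the paper's route: it gives no textual proof and settles these laws in Isabelle by this kind of definition unfolding and simplification with the scene laws.
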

We also show correspondences between the lens combinators and scenes:

\begin{theorem}[Lenses as Scenes] \isalink{https://github.com/isabelle-utp/Optics/blob/f48265c6c05a4c7dc6951ecc29815369b8ae9377/Scenes.thy\#L472}
$$
        \lone^\sim = \top \quad
        \lzero^\sim = \bot \quad
        (X \lplus Y)^\sim = X^\sim \sqcup Y^\sim \textit{ for } X \lindep Y \quad
        s_1 \sceneq{X^\sim} s_2 \textit{ iff } \lget_X~s_1 = \lget_X~s_2
$$
\end{theorem}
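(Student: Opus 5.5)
We prove the four correspondences separately, each by unfolding the definition of the lens scene $X^\sim s_1 s_2 = \lput_X~(\lget_X~s_2)~s_1$ and using extensionality on merge functions. Note that the paper writes the arguments of $\lput$ in a slightly inconsistent order; we fix the reading that $X^\sim$ updates $s_1$ at $X$ with the value $\lget_X~s_2$.

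For $\lone^\sim = \top$ we compute $\lone^\sim~s_1~s_2 = \lput_\lone~(\lget_\lone~s_2)~s_1$. Reading $\lone$ as the identity lens, with $\lget = \mathit{id}$ and $\lput~v~s = v$ (this is forced by the law $\lget~(\lput~v~s) = v$), this equals $s_2 = \top~s_1~s_2$.

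For $\lzero^\sim = \bot$ we take $\lzero$ to be the unit-view lens, with $\lput~v~s = s$. Then $\lzero^\sim~s_1~s_2 = s_1 = \bot~s_1~s_2$.

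For the state-equivalence law, suppose first that $\lget_X~s_1 = \lget_X~s_2$. Then $X^\sim~s_1~s_2 = \lput_X~(\lget_X~s_1)~s_1 = s_1$ by the third lens law, so $s_1 \sceneq{X^\sim} s_2$. Conversely, suppose $\lput_X~(\lget_X~s_2)~s_1 = s_1$. Applying $\lget_X$ to both sides and using the first lens law gives $\lget_X~s_2 = \lget_X~s_1$.

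The main case is $(X \lplus Y)^\sim = X^\sim \sqcup Y^\sim$ for $X \lindep Y$. On the left-hand side,
\[(X\lplus Y)^\sim~s_1~s_2 = \lput_X~(\lget_X~s_2)~(\lput_Y~(\lget_Y~s_2)~s_1).\]
On the right-hand side, by the definition of $\sqcup$,
\[Y^\sim~(X^\sim~s_1~s_2)~s_2 = \lput_Y~(\lget_Y~s_2)~(\lput_X~(\lget_X~s_2)~s_1).\]
These two expressions differ only in the order of the $X$-put and the $Y$-put, and they agree by independence, since the puts commute. This is the one step where the hypothesis $X \lindep Y$ is needed. It is also the only place requiring care: we must match the argument convention of $\lplus$'s put with the order of composition in $\sqcup$, and the commutation supplied by $\lindep$ resolves any mismatch. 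No further lens laws are needed for this case.
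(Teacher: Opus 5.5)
Your proof is correct and matches the paper's argument, which is given only as a mechanised proof: each identity follows by unfolding the definitions of $X^\sim$, $\top$, $\bot$, $\sqcup$ and $\lplus$ and applying the lens laws, with $X \lindep Y$ used only to commute the two puts in the sum case. Your handling of the paper's inconsistent argument order for $\lput$, and of the unstated definition of $\lzero$, is sound, because the lens laws force $\lput_{\lone}~v~s = v$ (given $\lget = \mathit{id}$) and $\lput_{\lzero}~v~s = s$ (given a singleton view).
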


\noindent The scene of the identity lens is the top scene, since $\lone$ covers the whole state space. Correspondingly, the empty lens, $\lzero$, views none of the state space. The summation of lenses $X$ and $Y$ yields the union of the two underlying scenes, provided $X \lindep Y$. Finally, two states $s_1$ and $s_2$ are equivalent modulo lens $X$ precisely when applying the underlying $\lget_X$ function yields the same value for each state.

In order to substantiate the intuition of scenes as sets of variables, we would reasonably expect that scenes form a complete Boolean algebra. However, this is not the case for the set of all scenes, $Scene(\src)$. For example, we cannot prove that $\sqcup$ is associative, and several other properties do not hold. We therefore need to place additional restrictions on the space of scenes. 
This is because several valid state spaces over $\src$ may exist. We can characterise whether two scenes address consistent coordinate systems using scene compatibility.

\begin{definition}
    Scenes $F$ and $G$ are compatible, written $F \scompat G$, if $F \sqcup G = G \sqcup F$. \isalink{https://github.com/isabelle-utp/Optics/blob/f48265c6c05a4c7dc6951ecc29815369b8ae9377/Scenes.thy\#L122}
\end{definition}
Compatibility is specified as commutativity of scene union. This means that the merge functions can be applied in either order with the same result, that is, $G~(F~s_1~s_2)~s_2 = F~(G~s_1~s_2)~s_2$ for any given states $s_1$ and $s_2$. The intuition here is that $F$ and $G$ both use the same coordinate system, and one does not interfere with the other. This is the case if $F$ are $G$ are independent, and so we have the result $F \lindep Q \implies F \scompat G$. But it is not necessary for scenes to be independent to be compatible, for example, every scene is self-compatible, $F \scompat F$. Similarly, $F \scompat F^\complement$, $\top$ and $\bot$ are both compatible with any scene, and compatibility is closed under $\sqcup$. Moreover, $\sqcup$ is associative when applied to a set of pairwise compatible scenes. We also have the following important result:

\begin{theorem} \label{thm:scene_union_eq} Given scenes $F \scompat G$,
    $s_1 \sceneq{(F \sqcup G)} s_2 \longleftrightarrow (s_1 \sceneq{F} s_2 \land s_1 \sceneq{G} s_2)$. \isalink{https://github.com/isabelle-utp/Optics/blob/f48265c6c05a4c7dc6951ecc29815369b8ae9377/Scenes.thy\#L627}
\end{theorem}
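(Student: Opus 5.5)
The plan is to unfold everything into equations on merge functions and use only the first scene law, $\mrg~(\mrg~x~y)~z = \mrg~x~z$. The compatibility hypothesis $F \scompat G$ is needed for only one of the four obligations. Recall that $s_1 \sceneq{H} s_2$ means $H~s_1~s_2 = s_1$, and that $(F \sqcup G)~s_1~s_2 = G~(F~s_1~s_2)~s_2$. The left-hand side of the statement is therefore the equation $G~(F~s_1~s_2)~s_2 = s_1$.

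For the right-to-left direction, assume $F~s_1~s_2 = s_1$ and $G~s_1~s_2 = s_1$. Rewriting the inner $F~s_1~s_2$ to $s_1$ gives $G~(F~s_1~s_2)~s_2 = G~s_1~s_2 = s_1$, as required. Compatibility is not used here.

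For the left-to-right direction, assume $G~(F~s_1~s_2)~s_2 = s_1$. First we obtain the $G$-conjunct. Substituting the hypothesis for $s_1$ gives $G~s_1~s_2 = G~(G~(F~s_1~s_2)~s_2)~s_2$. By the first scene law this collapses to $G~(F~s_1~s_2)~s_2$, which equals $s_1$ by hypothesis. This also needs no compatibility.

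The $F$-conjunct is the step where the asymmetry of $\sqcup$ matters, and it is the only place we invoke $F \scompat G$. Since $F \sqcup G = G \sqcup F$, the hypothesis can equally be written as $F~(G~s_1~s_2)~s_2 = s_1$. Substituting this expression for $s_1$ gives $F~s_1~s_2 = F~(F~(G~s_1~s_2)~s_2)~s_2$. The first scene law reduces this to $F~(G~s_1~s_2)~s_2$, which is $s_1$ again. This completes both directions.

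The only real obstacle is noticing that, without compatibility, the outer merge in $F \sqcup G$ is by $G$. The first scene law can absorb only an outer merge by the same scene, so the hypothesis must be commuted before the $F$-conjunct can be extracted.
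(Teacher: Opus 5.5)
Your proof is correct and follows the same route as the paper's mechanised Isabelle proof, since the paper itself gives no prose argument. You unfold $\sceneq{}$ and $\sqcup$, obtain the right-to-left direction and the $G$-conjunct from the overshadowing law $\mrg~(\mrg~x~y)~z = \mrg~x~z$, and invoke $F \scompat G$ only to commute the union before extracting the $F$-conjunct.
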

This theorem allows us to decompose state equality according to the scene union operator, provides that the two scenes $F$ and $G$ are compatible. This theorem allows us to partition the state, and will form the basis for many of the results in Section~\ref{sec:programs}.

However, compatibility of scenes is not a given. For example, consider the Cartesian coordinate lenses $xc$ and $yc$ and the polar coordinate lenses $\textit{radius}$ and $\textit{angle}$ from Example~\ref{ex:polarlens}. These two pairs of lenses are fundamentally incompatible with each other since, they provide two different views on the underlying state space. Therefore, we can derive $\neg (\textit{xc} \scompat \textit{radius})$, for example, since these two lenses interfere with each other's view. Changing the $x$-coordinate affects the radius in the polar coordinate system, and vice-versa. Therefore, when speaking of a state space, we will be identifying a family of compatible scenes.

\section{Scene Spaces}
\label{sec:scene_spaces}

In this section, we introduce scene spaces, which allow us to characterise consistent subsets of $Scene(\src)$ that satisfy the laws from Boolean algebra and the algebra of sets. We do this by restriction to subsets of scenes that are pairwise compatible, and thus address a consistent coordinate system. To introduce scene spaces, we first need to introduction of the scene basis, which allow us to identify the fundamental set of atomic variables:

\begin{definition}[Scene space basis]
A sequence of scenes $S_i \in Scene(\src),\, i \in \{1 \cdots n\}$ for some nonempty set $\src$ forms a scene space basis $\basis$ if it satisfies the following conditions: \isalink{https://github.com/isabelle-utp/Optics/blob/f48265c6c05a4c7dc6951ecc29815369b8ae9377/Scene_Spaces.thy\#L216}
$$\forall\, i,j \le n,\ i \neq j \implies S_i \lindep S_j \textnormal{  (independence); and } S_1 \sqcup S_2 \sqcup \cdots \sqcup S_n = \top \textnormal{ (coverage)}$$
\end{definition}

\noindent We define $\basis = \{S_i ~|~ i \in \{1 \cdots n\}\}$ as the set of basis scenes. In the context of program verification, we also call the basis scenes the variables, or \textit{Vars}, of the program. The use of a sequence here hints at the needs of the underlying mechanisation --- a sequence lets us ensure that the basis is finite, that it can be used for computation (due to the natural mapping to lists), and gives us a unique, canonical label for each scene in the basis.

Intuitively, the set of basis scenes is the collection of atomic state components, or the set of named variables --- each such component uniquely identifies a coordinate in the state space. The conditions imposed on basis scenes ensures that they do not overlap (independence), and that together they cover the entire space (coverage) so that every part of the state can be updated with one of the components. Since every lens can be assigned a scene, we can also identify the set of \textit{basis lenses}, as those lenses that have a corresponding basis scene.

A scene space is the smallest set $\mathfrak{S}_\mathcal{B}$, which contains the basis and is closed under union.

\begin{definition}[Scene space]
A scene space over $\basis$ is a finite set $\mathfrak{S}_\mathcal{B}$ such that: \isalink{https://github.com/isabelle-utp/Optics/blob/f48265c6c05a4c7dc6951ecc29815369b8ae9377/Scene_Spaces.thy\#L229}
    $$\bot \in \mathfrak{S}_\mathcal{B}; \quad 
    \text{If } A \in \mathcal{B} \text{ then } A \in \mathfrak{S}_\mathcal{B}; \quad
    \text{If } A, B \in \mathfrak{S}_\mathcal{B} \text{ then } A \sqcup B \in \mathfrak{S}_\mathcal{B}.$$
\end{definition}

\noindent We encode this in Isabelle as an inductive predicate, which provides an inductive reasoning principle for reasoning over elements of $\mathfrak{S}_\mathcal{B}$. If we consider the basis scenes as individual, atomic variables, then each $A \in \mathfrak{S}_\mathcal{B}$ corresponds to some set of variables, including the empty set ($\bot$). All pairs of scenes within a scene space satisfy the core property of compatibility:

\begin{lemma}
For any $A, B \in \mathfrak{S}_\mathcal{B}$, we have $A \scompat B$. \isalink{https://github.com/isabelle-utp/Optics/blob/f48265c6c05a4c7dc6951ecc29815369b8ae9377/Scene_Spaces.thy\#L276}
\end{lemma}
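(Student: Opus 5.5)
We argue by induction on the inductive definition of $\mathfrak{S}_\mathcal{B}$, using the facts about compatibility stated earlier: $\bot$ is compatible with every scene, $F \lindep G \implies F \scompat G$, every scene is self-compatible, and compatibility is closed under $\sqcup$. Since compatibility is defined as $F \sqcup G = G \sqcup F$, it is symmetric, so we may induct on one argument at a time.

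\emph{Step 1.} First we show that every basis scene $S_i$ is compatible with every $B \in \mathfrak{S}_\mathcal{B}$, by induction on $B$. If $B = \bot$, use $\bot \scompat F$. If $B = S_j$ is a basis scene, there are two cases. When $i = j$ we use self-compatibility. When $i \neq j$ the independence condition of the basis gives $S_i \lindep S_j$, hence $S_i \scompat S_j$. If $B = B_1 \sqcup B_2$ with $S_i \scompat B_1$ and $S_i \scompat B_2$ by hypothesis, closure of compatibility under $\sqcup$ gives $S_i \scompat B_1 \sqcup B_2$.

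\emph{Step 2.} We then prove the full statement by induction on $A$, for arbitrary $B \in \mathfrak{S}_\mathcal{B}$. The case $A = \bot$ is immediate. The case $A = S_i$ is Step 1. The case $A = A_1 \sqcup A_2$ uses $A_1 \scompat B$ and $A_2 \scompat B$ together with closure under $\sqcup$ (after symmetry), giving $(A_1 \sqcup A_2) \scompat B$.

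\emph{Main obstacle.} The delicate point is the precise form of ``compatibility is closed under $\sqcup$''. We need it as: if $F \scompat H$ and $G \scompat H$ then $(F \sqcup G) \scompat H$. Unfolded, this says $H\,(G\,(F\,s_1\,s_2)\,s_2)\,s_2 = G\,(F\,(H\,s_1\,s_2)\,s_2)\,s_2$. If the paper's lemma needs the extra hypothesis $F \scompat G$, we would strengthen the induction invariant to ``all elements are pairwise compatible'', which is still preserved. Failing that, we would prove the identity directly by rewriting with the two compatibility equations, using the scene laws $\mrg~(\mrg~x~y)~z = \mrg~x~z$ and $\mrg~x~(\mrg~y~z) = \mrg~x~z$ to collapse nested merges sharing the same second argument $s_2$. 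Concretely, we commute $H$ past $G$ using $G \scompat H$ (as $H\,(G\,t\,s_2)\,s_2 = G\,(H\,t\,s_2)\,s_2$ with $t = F\,s_1\,s_2$), then commute $H$ past $F$ inside, using $F \scompat H$.
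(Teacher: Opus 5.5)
Your proof is correct and is essentially the paper's: a nested induction over the scene-space structure of $A$ and $B$, with the basis cases discharged by pairwise independence (hence compatibility) or self-compatibility, and the union cases by preservation of compatibility under $\sqcup$. Your direct two-step rewrite establishes that preservation lemma without any hypothesis $F \scompat G$ and without the scene laws, so neither fallback in your ``main obstacle'' paragraph is needed.
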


\begin{proof}
 By induction on the scene space structure of $A$ and $B$.
\end{proof}

\noindent The proof relies on the fact that scene union preserves compatibility, and the basis scenes are all compatible due to their pairwise independence. Given this property, we can see that scene spaces are better behaved than the set of all scenes of a given type. In particular, we now have commutative unions, which allows defininition of the following operator:
\begin{definition}[Scene Set Union]
Given some finite set of scenes $\textbf{S} = \{S_1, S_2, \dots, S_n\}$ in some scene space $\mathfrak{S}_\mathcal{B}$ we define $\bigcup \textbf{S} = S_1 \sqcup S_2 \sqcup \dots \sqcup S_n$ \isalink{https://github.com/isabelle-utp/Optics/blob/f48265c6c05a4c7dc6951ecc29815369b8ae9377/Scene_Spaces.thy\#L16}
\end{definition}

\noindent This works analogously to the standard finite set union operator. The chosen ordering given to the scenes in the finite set $\textbf{S}$ does not matter as scene union commutes within a scene space. With this, we can define the decomposition of scenes into their component basis scenes. This important result will allow us to justify our intuition of scenes-as-sets:
\begin{lemma}[Basis decomposition]
Given some scene $A$ in scene space $\mathfrak{S}_\mathcal{B}$, its decomposition $\mathcal{D}(A) \in \mathcal{P}(Scene(\mathcal{S}))$ is the unique set of scenes satisfying the following properties: \isalink{https://github.com/isabelle-utp/Optics/blob/f48265c6c05a4c7dc6951ecc29815369b8ae9377/Scene_Spaces.thy\#L771}
    $$\mathcal{D}(A) \subseteq \basis \qquad \bigcup \mathcal{D}(A) = A \qquad \bot \notin \mathcal{D}(A)$$
\end{lemma}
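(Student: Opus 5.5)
Existence and uniqueness will be handled separately; uniqueness is the real content. Existence goes by induction on the inductive predicate defining $\mathfrak{S}_\mathcal{B}$. For $A = \bot$ take $\mathcal{D}(A) = \emptyset$, with the convention $\bigcup \emptyset = \bot$. For a basis scene $A = S_i$ take $\{S_i\}$, unless $S_i = \bot$, in which case we again take $\emptyset$; this is why the side condition $\bot \notin \mathcal{D}(A)$ is needed. For $A = B \sqcup C$ take $\mathcal{D}(B) \cup \mathcal{D}(C)$. The required equation $\bigcup(\mathcal{D}(B)\cup\mathcal{D}(C)) = \bigcup\mathcal{D}(B) \sqcup \bigcup\mathcal{D}(C)$ follows from three facts about elements of $\mathfrak{S}_\mathcal{B}$: union is commutative (by the compatibility lemma), union is associative (pairwise compatible scenes), and union is idempotent ($F \sqcup F = F$, since $\mrg~(\mrg~s_1~s_2)~s_2 = \mrg~s_1~s_2$ by the first scene law). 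Idempotence is what handles basis scenes occurring in both $\mathcal{D}(B)$ and $\mathcal{D}(C)$.

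For uniqueness, suppose $D_1, D_2 \subseteq \basis \setminus \{\bot\}$ satisfy $\bigcup D_1 = \bigcup D_2 = A$. We show $D_1 \subseteq D_2$; the reverse inclusion is symmetric. Take $S_i \in D_1$ and suppose $S_i \notin D_2$. We derive a contradiction in three steps.

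\emph{Step 1.} Since $S_i \in D_1$, we have $S_i \le \bigcup D_1 = A$. To see this, write $\bigcup D_1 = S_i \sqcup R$ and prove $F \le F \sqcup R$ from the scene laws.

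\emph{Step 2.} $S_i$ is independent of every member of $D_2$, because distinct basis scenes are independent. We then show independence is preserved by finite unions: if $S_i \lindep F$ and $S_i \lindep G$, then $S_i \lindep F \sqcup G$. This is a direct calculation from the independence equation. Hence $S_i \lindep \bigcup D_2 = A$.

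\emph{Step 3.} A scene that is both below $A$ and independent of $A$ must be $\bot$. From $S_i \le A$ we get $A = S_i \sqcup A'$ with $A'$ compatible. Alternatively, $S_i \le A$ means $S_i \lindep A^\complement$ by the law $F \lindep G^\complement \leftrightarrow F \le G$. Combining $S_i \lindep A$ and $S_i \lindep A^\complement$ with $A \sqcup A^\complement = \top$ and Step 2 gives $S_i \lindep \top$. Finally, $S_i \lindep \top$ forces $S_i = \bot$: instantiating the independence equation gives $\top~(S_i~s_1~s_2)~s_3 = s_3$ and $S_i~(\top~s_1~s_3)~s_2 = S_i~s_3~s_2$. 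So $S_i~s_3~s_2 = s_3$ for all states, i.e.\ $S_i = \bot$. This contradicts $\bot \notin D_1$.

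The main obstacle is Step 3 and the algebra feeding it: that $\le$ and $\lindep$ interact correctly with $\sqcup$ inside the scene space. In particular, the preservation of independence under $\sqcup$ must be checked carefully from the merge laws, since $\sqcup$ is not associative in general. If that calculation resists, the fallback is to strengthen the induction so it carries, for each $A \in \mathfrak{S}_\mathcal{B}$, the invariant that every basis scene is either $\le A$ or $\lindep A$. That invariant is preserved by $\sqcup$ and holds for basis scenes and for $\bot$.
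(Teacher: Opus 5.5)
Your proof is correct and follows the same line as the paper. The paper justifies the lemma only by noting that every element of $\mathfrak{S}_\mathcal{B}$ is assembled from basis scenes and that $\bot$ is excluded; your Steps~1--3 make that uniqueness argument explicit (a basis scene in the decomposition lies below $A$, one outside it is independent of $A$, and only $\bot$ can be both). One repair is needed in Step~1: $F \le F \sqcup R$ does not follow from the scene laws for arbitrary $F$ and $R$, not even compatible ones, so first obtain $S_i \lindep R$ by applying your Step~2 lemma to $R = \bigcup(D_1 \setminus \{S_i\})$; after that, $S_i \le S_i \sqcup R$ is a short calculation from the independence equation and the three scene laws.
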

Thus, we can decompose any $A \in \mathfrak{S}_\mathcal{B}$ into a set of basis scenes. The decomposition is unique as every element of $\mathfrak{S}_\mathcal{B}$ is built by composition of basis scenes, and we exclude the bottom scene. $\mathcal{D}$ is an injective function on $\mathfrak{S}_\mathcal{B}$, so we can demonstrate equality of two scenes by showing these decompose to the same set. The decomposition gives us a way to relate set operations to scene operations, and we find a direct relation between them: \isalink{https://github.com/isabelle-utp/Optics/blob/f48265c6c05a4c7dc6951ecc29815369b8ae9377/Scene_Spaces.thy\#L860}
$$
    \mathcal{D}(\top) = \mathcal{B} \setminus \{\bot\} \qquad
    \mathcal{D}(A \sqcup B) = \mathcal{D}(A) \cup \mathcal{D}(B) \qquad
    \mathcal{D}(A \sqcap B) = \mathcal{D}(A) \cap \mathcal{D}(B) \qquad
$$
$$
    \mathcal{D}(A^\complement) = \mathcal{B} \setminus \mathcal{D}(A) \setminus \{\bot\} \qquad
    (A \subseteq B) \iff \mathcal{D}(A) \subseteq \mathcal{D}(B)
$$
The equalities demonstrate that $\mathcal{D}$ is a homomorphism between scene spaces and sets, and therefore justifies our use of scenes to characterise sets of variables. We can define the distributed scene intersection by taking advantage of this analogy: $\bigcap_{i=1}^n S_i \defs (\bigcup_{i=1}^n S_i^\complement)^\complement$.

%
This lattice structure allows us to prove the following properties of the $\bigcup$ operator:
\begin{lemma}
    $
        s_1 \approx s_2 \text{ on } \bigcup X \iff  (\forall\, x \in X.\, (s_1 \approx s_2 \text{ on } x))
    $ \isalink{https://github.com/isabelle-utp/Optics/blob/f48265c6c05a4c7dc6951ecc29815369b8ae9377/Scene_Spaces.thy\#L940}
\end{lemma}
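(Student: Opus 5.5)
The plan is to prove the statement by induction on the finite set $X \subseteq \mathfrak{S}_\mathcal{B}$, using Theorem~\ref{thm:scene_union_eq} as the inductive step and the compatibility lemma for scene spaces to discharge its side condition.

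\emph{Base case.} For $X = \emptyset$ we have $\bigcup \emptyset = \bot$. The right-hand side is then vacuously true. The left-hand side $s_1 \sceneq{\bot} s_2$ also holds for all states, by the Scene Laws theorem, or directly since $\mrg[\bot]~s_1~s_2 = s_1$.

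\emph{Inductive step.} Consider $X \cup \{x\}$ with $x \notin X$. First I will establish the unfolding $\bigcup (X \cup \{x\}) = x \sqcup \bigcup X$. This holds because the definition of scene set union is insensitive to ordering: all elements of a scene space are pairwise compatible, so $\sqcup$ is commutative there, and associative on pairwise compatible scenes.

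Next I need $x \scompat \bigcup X$. Since $\mathfrak{S}_\mathcal{B}$ is closed under $\sqcup$ and contains $\bot$, a trivial induction shows $\bigcup X \in \mathfrak{S}_\mathcal{B}$. The compatibility lemma then gives $x \scompat \bigcup X$.

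Theorem~\ref{thm:scene_union_eq} now yields
$s_1 \sceneq{(x \sqcup \bigcup X)} s_2 \leftrightarrow (s_1 \sceneq{x} s_2 \land s_1 \sceneq{\bigcup X} s_2)$.
Applying the induction hypothesis to the second conjunct gives $\forall y \in X.\, s_1 \sceneq{y} s_2$. Combining the two conjuncts gives exactly the quantification over $X \cup \{x\}$.

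The main obstacle is the bookkeeping in the inductive step: justifying that $\bigcup$ over an inserted element really unfolds to $x \sqcup \bigcup X$, independent of the enumeration of the set. In Isabelle this amounts to showing that $\sqcup$ restricted to $\mathfrak{S}_\mathcal{B}$ is commutative and left-commutative, so that the finite-set fold is well defined. I would first derive left-commutativity, $a \sqcup (b \sqcup c) = b \sqcup (a \sqcup c)$, from associativity plus commutativity for compatible scenes. I would then use the standard fold-insert lemma for commutative folds, restricted to the carrier $\mathfrak{S}_\mathcal{B}$.
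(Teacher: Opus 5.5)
Your proposal is correct and follows essentially the route the paper indicates: the paper gives no proof beyond calling the lemma the finite-set generalisation of Theorem~\ref{thm:scene_union_eq}, and your induction on $X \subseteq \mathfrak{S}_\mathcal{B}$, using pairwise compatibility of scene-space elements to discharge $x \scompat \bigcup X$ at each step, is the natural way to fill that in. Your extra bookkeeping on the order-independence of the fold covers what the paper assumes when it remarks that the ordering in $\bigcup$ does not matter because $\sqcup$ commutes within a scene space.
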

This is a generalisation of Theorem~\ref{thm:scene_union_eq}, but for a subset $X \subseteq \mathfrak{S}_\mathcal{B}$. Similarly, we can prove the following law relating the set complement $\mathcal{B}$ and the scene complement:

\begin{lemma}[Scene Complement Union]
    Given $X \subseteq \basis$, the union of the set of all the variables $\basis$ minus $X$ is the complement of the union of $X$:
    $\bigcup(\basis\setminus X) = (\bigcup X)^\complement$
\end{lemma}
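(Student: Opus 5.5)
The plan is to prove that $\bigcup(\basis\setminus X)$ is a complement of $\bigcup X$ in the Boolean-algebra sense, and then identify it with $(\bigcup X)^\complement$. Write $A = \bigcup X$ and $B = \bigcup(\basis\setminus X)$. Both lie in $\mathfrak{S}_\basis$, since scene spaces contain the basis and are closed under $\sqcup$. Hence $A$, $B$ and $A^\complement$ are pairwise compatible, using the compatibility lemma together with $F \scompat F^\complement$.

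There are two key facts to establish. First, \emph{coverage}: $A \sqcup B = \top$. Because union is associative and commutative on pairwise compatible scenes, $A \sqcup B$ is the union of all of $\basis$ in some order, and this equals $\top$ by the coverage axiom of the basis. Second, \emph{independence}: $A \lindep B$. Every element of $X$ is independent of every element of $\basis\setminus X$ by basis independence, since the two sets are disjoint and hence their elements are distinct indices. I would prove, by induction on the finite set, that independence is preserved by $\sqcup$ in each argument: if $F \lindep H$ and $G \lindep H$ then $(F\sqcup G)\lindep H$. This is a direct calculation with the merge laws, and $\bot$ is independent of everything. Together these give $A \lindep B$.

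Next I would use $F \lindep G^\complement \leftrightarrow F \le G$ from the Scene Laws. Instantiating with $F = B$ and $G = A^\complement$, and using $(A^\complement)^\complement = A$, the independence $B \lindep A$ gives $B \le A^\complement$. For the reverse inequality, note that $A^\complement \le \top = A \sqcup B$. It remains to show that $A^\complement \le A \sqcup B$ together with $A^\complement \lindep A$ (which follows from $A^\complement \le A^\complement$ via the same law) forces $A^\complement \le B$.

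This last step is the main obstacle. The cleanest route is via state equivalences: order on scenes should be characterised by the inclusion of $\sceneq{G}$ in $\sceneq{F}$ for compatible scenes. By Theorem~\ref{thm:scene_union_eq}, $s_1 \sceneq{\top} s_2$ holds iff $s_1 \sceneq{A} s_2$ and $s_1 \sceneq{B} s_2$. From this one shows that $s\sceneq{B}t$ implies $s \sceneq{A^\complement} t$, by merging through $\mrg[A]~s~t$, which agrees with $s$ on $A$ and with $t$ on $A^\complement$.

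If this semantic route becomes cumbersome, the fallback is to work through the decomposition homomorphism. Here $\mathcal{D}(B) = \basis\setminus X\setminus\{\bot\}$ because $B$ is the union of those basis scenes, and $\mathcal{D}(A^\complement) = \basis\setminus\mathcal{D}(A)\setminus\{\bot\} = \basis\setminus X\setminus\{\bot\}$, noting that $\mathcal{D}(A) = X\setminus\{\bot\}$. Since $A^\complement$ also lies in the scene space (by the complement law for $\mathcal{D}$), injectivity of $\mathcal{D}$ on $\mathfrak{S}_\basis$ yields $B = A^\complement$ immediately. I expect this fallback to be the actual short proof.
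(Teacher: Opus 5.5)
\textbf{There is a gap in the last step of your primary route, and the fallback is circular.}

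Most of your primary route is sound. The induction showing that $\sqcup$ preserves independence is a correct calculation, so $A \lindep B$. Coverage gives $A \sqcup B = \top$. The inequality $B \le A^\complement$ follows from $F \lindep G^\complement \leftrightarrow F \le G$. Your merging argument also works and yields $\sceneq{B} \subseteq \sceneq{A^\complement}$. Note, though, that $\mrg[A]~s~t$ agrees with $t$ on $A$ and with $s$ on $A^\complement$, the reverse of what you wrote. With $u = \mrg[A]~s~t$ and $s \sceneq{B} t$, you get $u \sceneq{B} s \sceneq{B} t$ via $B \le A^\complement$. Hence $u \sceneq{A \sqcup B} t$ by Theorem~\ref{thm:scene_union_eq}, so $u = t$.

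The gap is the final inference. The principle you invoke, that for compatible scenes $F \le G$ iff $\sceneq{G} \subseteq \sceneq{F}$, is false. Take $\src = \{0,1\}^3$ with states $(p,q,j)$, and set $\mrg[F]~(p,q,j)~(p',q',j') = (p,q,j')$ and $\mrg[G]~(p,q,j)~(p',q',j') = (p,\, q' \oplus j'(p \oplus p'),\, j')$, where $\oplus$ is exclusive or. Both satisfy the scene laws, $F \sqcup G = G \sqcup F = G$, and $\sceneq{G} \subseteq \sceneq{F}$. Yet for $s_1 = (1,0,0)$, $s_2 = (0,0,0)$, $s_3 = (0,0,1)$ the two sides of the order law are $(1,0,1)$ and $(1,1,1)$, so $F \not\le G$. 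Separately, you never justified $A^\complement \scompat B$: the compatibility lemma only covers elements of $\mathfrak{S}_\basis$.

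Your fallback does not rescue this, because it is circular. $\mathcal{D}$ is only defined on $\mathfrak{S}_\basis$, and the inductive definition of $\mathfrak{S}_\basis$ has no complement clause. The complement law for $\mathcal{D}$ therefore presupposes $(\bigcup X)^\complement \in \mathfrak{S}_\basis$, and that membership is essentially the content of this lemma.

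The repair needs nothing beyond what you already have. There are two ways to close.

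\emph{Order plus equivalence inclusion.} Note that $F \le G$ together with $\sceneq{F} \subseteq \sceneq{G}$ forces $F = G$. The order law with $s_1 = s_2 = x$ and $s_3 = y$ gives $\mrg[F]~x~y = \mrg[G]~x~(\mrg[F]~x~y)$. Since $\mrg[F]~x~y \sceneq{F} y$, we also have $\mrg[F]~x~y \sceneq{G} y$. The second merge law then reduces the right-hand side to $\mrg[G]~x~y$. Apply this with $F = B$ and $G = A^\complement$.

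\emph{Cancellation.} If $X \lindep Z$, then $\mrg[X \sqcup Z]~s_1~(\mrg[Z]~s_2~s_1) = \mrg[X]~s_1~s_2$. So $X \sqcup Z = Y \sqcup Z$ with $X, Y \lindep Z$ implies $X = Y$. Take $Z = A$, $X = A^\complement$, $Y = B$, using $A^\complement \sqcup A = \top = B \sqcup A$.

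The cancellation route is the cleaner one. It finishes the proof from exactly the two facts you isolated at the start, $A \lindep B$ and coverage, without any appeal to state equivalences or to the decomposition $\mathcal{D}$.
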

With this, we can show that scene spaces form a complete lattice, where the scene set union acts as a supremum, and also a complete Boolean algebra, with $A^\complement$ as complementation:

\begin{theorem}[Scene Set Complete Boolean Algebra]
The operator $\bigcup$ together with the ordering $\subseteq$ and complement $A^\complement$ induces a complete Boolean algebra on the scene space. \isalink{https://github.com/isabelle-utp/Optics/blob/f48265c6c05a4c7dc6951ecc29815369b8ae9377/Scene_Spaces.thy\#L1321}
\end{theorem}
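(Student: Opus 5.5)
The plan is to transport the structure of the powerset $\mathcal{P}(\basis\setminus\{\bot\})$ to $\mathfrak{S}_\mathcal{B}$ along the decomposition map $\mathcal{D}$, rather than verify the Boolean algebra axioms directly on merge functions. By the Basis decomposition lemma, $\mathcal{D}$ is injective on $\mathfrak{S}_\mathcal{B}$. It is also surjective onto $\mathcal{P}(\basis\setminus\{\bot\})$: for any $X \subseteq \basis\setminus\{\bot\}$, the scene $\bigcup X$ lies in $\mathfrak{S}_\mathcal{B}$ by the closure rules, and its decomposition is $X$. This follows from uniqueness of decomposition, since $X$ itself satisfies the three defining conditions. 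Hence $\mathcal{D}$ is a bijection with inverse $X \mapsto \bigcup X$.

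Next I show that $\mathcal{D}$ is an order isomorphism that commutes with the operations, using the homomorphism equations stated just before the theorem. These are $(A \le B) \iff \mathcal{D}(A) \subseteq \mathcal{D}(B)$, $\mathcal{D}(A\sqcup B)=\mathcal{D}(A)\cup\mathcal{D}(B)$, $\mathcal{D}(A\sqcap B)=\mathcal{D}(A)\cap\mathcal{D}(B)$, $\mathcal{D}(\top)=\basis\setminus\{\bot\}$ and $\mathcal{D}(A^\complement)=\basis\setminus\mathcal{D}(A)\setminus\{\bot\}$. One small point: $\mathfrak{S}_\mathcal{B}$ must be closed under $^\complement$ and $\sqcap$. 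For complement, I write $A=\bigcup\mathcal{D}(A)$ and apply the Scene Complement Union lemma, which gives $A^\complement=\bigcup(\basis\setminus\mathcal{D}(A))$, an element of $\mathfrak{S}_\mathcal{B}$. Closure under $\sqcap$ then follows from closure under $\sqcup$ and $^\complement$. With this in place, every Boolean algebra law on $\mathfrak{S}_\mathcal{B}$ is verified by applying the injective $\mathcal{D}$ to both sides and using the corresponding set identity. These laws are associativity, commutativity, absorption and distributivity of $\sqcup,\sqcap$, together with $A\sqcup A^\complement=\top$ and $A\sqcap A^\complement=\bot$, where $\mathcal{D}(\bot)=\emptyset$ because $\bot$ is excluded.

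For completeness, for an arbitrary subset $\mathbf{S}\subseteq\mathfrak{S}_\mathcal{B}$ I note that $\mathfrak{S}_\mathcal{B}$ is finite, so $\mathbf{S}$ is finite and $\bigcup\mathbf{S}$ is well defined. This uses the commutativity and associativity of $\sqcup$ under pairwise compatibility, which the earlier compatibility lemma for scene spaces supplies. I then show $\mathcal{D}(\bigcup\mathbf{S})=\bigcup_{A\in\mathbf{S}}\mathcal{D}(A)$ by induction on the finite set $\mathbf{S}$, using the binary union law. It follows that $\bigcup\mathbf{S}$ is the least upper bound with respect to $\le$, since in the powerset the set union is the least upper bound. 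Infima are defined as $\bigcap\mathbf{S}=(\bigcup\{A^\complement \mid A\in\mathbf{S}\})^\complement$ and handled by De Morgan through $\mathcal{D}$. The empty family gives $\bot$ and $\top$ respectively.

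I expect the main obstacle to be the well-definedness and closure bookkeeping rather than the lattice laws themselves. This includes the surjectivity of $\mathcal{D}$, closure under complement, and independence of $\bigcup$ from the enumeration order, which must all be in place before the transport argument applies. In particular, the equations for $\mathcal{D}$ are only stated for elements of $\mathfrak{S}_\mathcal{B}$, so each step needs membership in $\mathfrak{S}_\mathcal{B}$ established first. For the finite union I would first prove, by induction on the inductive definition of $\mathfrak{S}_\mathcal{B}$, that it is closed under all operators. After that, the transport along the bijection is routine.
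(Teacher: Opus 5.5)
Your proposal is correct and follows essentially the route the paper indicates. You transport the powerset Boolean algebra along the injective homomorphism $\mathcal{D}$, using the Scene Complement Union lemma for closure of $\mathfrak{S}_\mathcal{B}$ under complement, and the finite scene set union $\bigcup$ (well defined by pairwise compatibility, and finite so completeness is automatic) as the supremum. The membership bookkeeping you flag before applying the $\mathcal{D}$ equations is exactly what is needed.
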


Thus, with scene spaces we have arrived at sufficient structure on a source $\src$ to characterise the variables in the state space, the relationship between them (independence and overlap), the characterisation of a state space (cf. Definition~\ref{def:state-space}) and the ability to collect them in sets. In the next section we will put this theory to work in describing the meta-logical operators of a programming reasoning calculus.


\section{Reasoning about Programs}
\label{sec:programs}
Through scene spaces, we are equipped with a sufficient algebraic structure to develop meta-logical functions on expressions and programs. Our development is based on the Isabelle/UTP program model~\cite{Foster2020-IsabelleUTP}, which follows the programs-as-predicates approach~\cite{Hehner93,Hoare&98}. Isabelle/UTP is an implementation of Hoare and He's \textit{Unifying Theories of Programming}~\cite{Hoare&98} semantic framework in Isabelle/HOL. 

As usual in a shallow embedding, expressions are modelled as functions $e : \src \to V$. Application of a function, $e(s)$, corresponds to evaluation of the expression in a particular state $s \in \src$. For example, given lenses $x, y, z : \mathbb{Z} \lto \src$, we denote the expression $x + y \le z$ (syntax) by the expression function $\lambda s.\, \lget_x~s + \lget_y~s \le \lget_y~s$ (semantics). Our Shallow Expressions library manages the automatic translation between these two representations~\cite{Shallow_Expressions-AFP}.

Although expression functions represent only semantics, without explicit abstract syntax tree, scenes allows us to characterise meta-logical properties. The unrestriction operator $a \unrest e$~\cite{Oliveira07} describes dependence of an expression on scene $a$. If $e$ does not make use of any of the variables of $a$ in its evaluation, then $a$ is unrestricted, that is $a \unrest e$. This operator has been formally defined over lenses, but we can now generalise it for scenes.
\begin{definition}[Unrestriction] $(a \unrest e) \defs (\forall s~s'.\, e~(\mrg[a]~s~s') = e(s))$
\end{definition}
The expression $e$ is unrestricted by the variables of $a$ if whenever we overwrite the values with those from another state $s'$, the evaluation of $e$ does not change. With this definition, we can recover the more specific definition of unrestriction for lenses~\cite{Foster2020-IsabelleUTP}:

\begin{theorem} $x^\sim \unrest e \leftrightarrow (\forall s~v.\, e~(\lput_x~s~v) = e~s)$
\end{theorem}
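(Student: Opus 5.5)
Unfolding the definitions reduces the statement to an equivalence between two quantified equations. By the definition of unrestriction, $x^\sim \unrest e$ means $\forall s~s'.\, e~(\mrg[x^\sim]~s~s') = e(s)$. By the definition of the lens scene, $\mrg[x^\sim]~s~s' = \lput_x~s~(\lget_x~s')$. (The paper's notation writes the state argument of $\lput$ first in both places, so I read $\lput_x~s~v$ uniformly as ``update $s$ at $x$ with $v$''.) The goal therefore becomes
\[
\bigl(\forall s~s'.\, e~(\lput_x~s~(\lget_x~s')) = e~s\bigr) \longleftrightarrow \bigl(\forall s~v.\, e~(\lput_x~s~v) = e~s\bigr).
\]
I will prove the two directions separately.

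\emph{Right to left.} This is immediate: instantiate $v := \lget_x~s'$ in the hypothesis.

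\emph{Left to right.} Given $s$ and $v$, I need a witness state $s'$ with $\lget_x~s' = v$. The natural choice is $s' := \lput_x~s~v$, or any state updated by $v$. The first lens law $\lget~(\lput~v~s) = v$ then gives $\lget_x~s' = v$. Substituting into the hypothesis yields $e~(\lput_x~s~v) = e~s$, as required.

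The only point needing care is this witness construction. It depends on $\lput$ being total, so that some state with view $v$ always exists, and on the get--put law. Both hold because $x$ is a (total) lens satisfying the three lens laws and $\src \neq \emptyset$; indeed $s$ itself is available. No scene-space machinery or compatibility results are needed, since the argument is a direct calculation from the definitions plus one lens law.
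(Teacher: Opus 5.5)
Your proof is correct and matches the paper's (mechanised) argument. You unfold unrestriction and the lens scene $\mrg[x^\sim]~s~s' = \lput_x~s~(\lget_x~s')$, instantiate $v := \lget_x~s'$ for right-to-left, and for left-to-right take the witness $s' := \lput_x~s~v$ with $\lget_x~(\lput_x~s~v) = v$ (conventionally called put--get rather than get--put, though you cite the correct equation).
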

An expression is unrestricted by a lens, $x^\sim \unrest e$, if evaluating the expression in a state where $x$ has been modified does not alter the evaluation.
Proving an unrestriction conjecture for a lens is, on the most part, fully automated. In Isabelle this process reduces to $\beta$-reduction and substitution. We can show that if substitution of a lens with an arbitrary value does not change the valuation of the expression, then the variable is unrestricted, e.g. $(x + y)[z \mapsto v] = x + y$, and so $z \unrest (x + y)$. However, checking that an expression is \emph{not} unrestricted by $x$ is, in general, undecidable. We cannot simply check for the presence of a variable, since this does not guarantee a dependency. For example, the expression $x \le 0 \lor x > 0$ reduces to $\textit{True}$ and so does not depend on $x$, even though it is mentioned.

Unrestriction has a deep link with the free variables of an expression. We take the following definition from Andr\'{e} Platzer~\cite[Definition 8]{platzer_complete_2017}, used in his foundational work on a uniform substitution principle for differential dynamic.
Here, the free variables are those that influence the evaluation of an expression. With scenes, we define:
\begin{definition}
    $
        FV(e) \defs  \bigcup\,\{x\in \basis\, \vert\, \exists\, s,s': s\approx s' \text{ on } x^\complement\, \land \,  e(s) \neq e(s')\}
    $\isalink{https://github.com/cisasam/IsabelleFreeBoundVariables/blob/main/Vars_UTP.thy\#L26}
\end{definition}
The set of free variables of $e$ is the subset of the basis scenes where there are two states $s$ and $s'$ that are equal except on $x$, but disagree on their evaluation of $e$, which implies a dependency. 
For example, $FV(x + y)=\{x,y\}$. From this, we can prove the following lemma:

\begin{lemma}[Coincidence Effect for Terms and Formulas]
    The set of free variables of a term/formula $FV(e)$ is the smallest set that holds the coincidence effect property: \isalink{https://github.com/cisasam/IsabelleFreeBoundVariables/blob/main/Vars_UTP.thy\#L138}
    \begin{align*}
        s \approx s' \text{ on } \bigcup FV(e) \implies  e(s) = e(s')
    \end{align*}
\end{lemma}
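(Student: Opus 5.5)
The statement has two parts, and I would prove them separately: the \emph{coincidence} property itself, and \emph{minimality}, meaning that any $X \subseteq \basis$ with $s \approx s' \text{ on } \bigcup X \implies e(s) = e(s')$ satisfies $FV(e) \subseteq X$.

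Throughout I use one reformulation of the Scene Laws. Since $\mrg[F^\complement]~a~b = \mrg[F]~b~a$, a basis scene $y \notin FV(e)$ satisfies $e(a) = e(b)$ whenever $a \approx b \text{ on } y^\complement$.

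\textbf{Minimality.} Suppose $X$ has the coincidence property and $x \in FV(e) \setminus X$.
\begin{itemize}
\item By the definition of $FV$ there are witnesses $s, s'$ with $s \approx s' \text{ on } x^\complement$ and $e(s) \neq e(s')$.
\item For each $z \in X$ we have $z \neq x$, so $z \lindep x$ by independence of the basis. Writing $x = (x^\complement)^\complement$, the law $F \lindep G^\complement \leftrightarrow F \le G$ gives $z \le x^\complement$.
\item The scene order preserves state equivalence, so $s \approx s' \text{ on } z$ for every $z \in X$.
\item By the previous lemma on $\bigcup$ (the generalisation of Theorem~\ref{thm:scene_union_eq}), $s \approx s' \text{ on } \bigcup X$.
\item The coincidence property of $X$ then gives $e(s) = e(s')$, a contradiction.
\end{itemize}

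\textbf{Coincidence.} Assume $s \approx s' \text{ on } \bigcup FV(e)$; by the lemma on $\bigcup$ this means $s \approx s' \text{ on } x$ for each $x \in FV(e)$. Enumerate $\basis \setminus FV(e) = \{y_1, \dots, y_k\}$, which is finite since the basis is. Build a chain $s_0 = s$ and $s_i = \mrg[y_i]~s_{i-1}~s'$, copying the $y_i$ component from $s'$. The argument then has three steps.
\begin{itemize}
\item \emph{Each step preserves $e$.} We have $s_{i-1} \approx s_i \text{ on } y_i^\complement$, because
$\mrg[y_i^\complement]~s_{i-1}~s_i = \mrg[y_i]~(\mrg[y_i]~s_{i-1}~s')~s_{i-1} = \mrg[y_i]~s_{i-1}~s_{i-1} = s_{i-1}$
by the first and third scene laws. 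Since $y_i \notin FV(e)$, the reformulation above gives $e(s_{i-1}) = e(s_i)$. Hence $e(s) = e(s_k)$.
\item \emph{The final state agrees with $s'$ everywhere.} I show $s_k \approx s' \text{ on } b$ for every basis scene $b$. If $b = y_j$, then $s_j \approx s' \text{ on } y_j$ by construction. This is preserved by the later merges on $y_{j+1}, \dots, y_k$, using independence $y_i \lindep y_j$ to commute the merges and the absorption laws. If $b \in FV(e)$, the same commutation argument shows each merge on $y_i \neq b$ leaves the $b$-component untouched, so $s_k \approx s' \text{ on } b$ follows from $s \approx s' \text{ on } b$.
\item \emph{Conclude.} By the lemma on $\bigcup$ and coverage, $s_k \approx s' \text{ on } \top$. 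Since $\mrg[\top]~a~b = b$, equivalence on $\top$ is equality, so $s_k = s'$ and $e(s) = e(s')$.
\end{itemize}

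\textbf{Main obstacle.} I expect the main difficulty to be the bookkeeping in the second step of the coincidence argument. This is showing that merging along one basis scene preserves equivalence on an independent basis scene, i.e. that $a \approx c \text{ on } b$ and $b \lindep y$ imply $\mrg[y]~a~d \approx c \text{ on } b$. I would first prove this as a separate auxiliary lemma directly from the independence equation together with the first two scene laws, and then run the chain argument by induction on $k$ (or on the finite set $\basis \setminus FV(e)$) rather than with explicit indices.
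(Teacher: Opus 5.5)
Your proof is correct. The minimality step, $z \lindep x \Rightarrow z \le x^\complement$ followed by transporting $\approx$ down the scene order, is sound. Restricting to $X \subseteq \basis$ is the right reading of ``smallest''. In the chain, each link satisfies $s_{i-1} \approx s_i \text{ on } y_i^\complement$. Your auxiliary lemma follows from the independence equation together with $\mrg[b]~a~c = a$, and coverage then forces $s_k = s'$.

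The paper gives no written proof of this lemma, only the Isabelle link. Your intermediate-state chain is the term-level analogue of the Platzer-style ``induction over intermediate states'' that the paper describes for the program coincidence lemma.

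A shorter route for the coincidence direction is available from the paper's own lemmas:
\begin{itemize}
\item Each $y \in \basis \setminus FV(e)$ satisfies $y \unrest e$. This is half of ``Not Unrestricted is Free'', and it is immediate from the definitions, so it does not depend on coincidence.
\item Unrestriction is closed under $\sqcup$, since $e(\mrg[b]~(\mrg[a]~s~s')~s') = e(\mrg[a]~s~s') = e(s)$. Hence $\bigcup(\basis \setminus FV(e)) \unrest e$.
\item By Scene Complement Union this says $A^\complement \unrest e$ for $A = \bigcup FV(e)$.
\item Then $s \approx s' \text{ on } A$ gives $e(s') = e(\mrg[A^\complement]~s'~s) = e(\mrg[A]~s~s') = e(s)$.
\end{itemize}

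This shortcut pushes the independence and coverage bookkeeping you flagged as the main obstacle into the Scene Complement Union lemma. Your version rederives that content by hand from the scene laws, basis independence and coverage. It is longer, but it is self-contained.
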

Which ensures that two states that agree on the free variables of an expression will agree on the evaluation of that expression.
Its usefulness comes from the fact that if all the end states of two programs always agree on the values of the free variables of an expression, then the evaluation of that expression will be the same after any of the two programs.

We can restate $FV$ using unrestriction:
\begin{lemma}[Not Unrestricted is Free]
    $
        FV(e) = \{x\in\mathcal{V}\,\vert\, \neg (x\, \sharp\, e)\}
    $\isalink{https://github.com/cisasam/IsabelleFreeBoundVariables/blob/main/Vars_UTP.thy\#L6}
\end{lemma}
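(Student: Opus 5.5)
The plan is to read the statement as an equality of subsets of the basis $\basis$ (here written $\mathcal{V}$). On the left we take the set comprehension inside the definition of $FV(e)$, before the union $\bigcup$ is applied. Both sides are then comprehensions over the same carrier $\basis$, so it suffices to fix an arbitrary basis scene $x$ and prove
$$\bigl(\exists\, s, s'.\; s \approx s' \text{ on } x^\complement \land e(s) \neq e(s')\bigr) \iff \neg (x \unrest e).$$
Unfolding the definition of unrestriction, the right-hand side is $\exists\, s, s'.\; e(\mrg[x]~s~s') \neq e(s)$.

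The first step is to unfold the state equivalence on the complement. By the definition of $F^\complement$ we have $\mrg[x^\complement]~s~s' = \mrg[x]~s'~s$. So $s \approx s' \text{ on } x^\complement$ is exactly the equation $\mrg[x]~s'~s = s$. After this rewriting, both sides mention only $\mrg[x]$, and the rest of the argument uses only the three scene laws of Definition~\ref{def:scene}.

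($\Rightarrow$) Suppose $\mrg[x]~s'~s = s$ and $e(s) \neq e(s')$. We instantiate the unrestriction condition with the swapped pair $(s', s)$. This gives $e(\mrg[x]~s'~s) = e(s) \neq e(s')$, which is the required failure of $x \unrest e$.

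($\Leftarrow$) Suppose $e(\mrg[x]~s~s') \neq e(s)$, and set $u \defs \mrg[x]~s~s'$. We claim the pair $(s, u)$ is the required witness. The inequality $e(s) \neq e(u)$ holds by assumption. For $s \approx u \text{ on } x^\complement$ we must show $\mrg[x]~u~s = s$. This follows from $\mrg[x]~(\mrg[x]~s~s')~s = \mrg[x]~s~s = s$, using first the law $\mrg~(\mrg~a~b)~c = \mrg~a~c$ and then idempotence.

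The only non-mechanical step is choosing this witness $u$ in the ($\Leftarrow$) direction. One must see that merging $s$ with $s'$ along $x$ yields a state that agrees with $s$ outside $x$, and this is exactly what the first scene law provides. Everything else is definitional unfolding, and Isabelle's simplifier should handle it once the scene laws and the definition of complement are available as rewrite rules.
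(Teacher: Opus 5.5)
Your proof is correct and follows essentially the same definitional route as the paper. The paper gives only the mechanised proof, which reduces the set equality to the pointwise equivalence $\neg(x \unrest e) \iff \exists\, s, s'.\; s \approx s' \text{ on } x^\complement \land e(s) \neq e(s')$, reading $FV(e)$ as the comprehension before $\bigcup$ is applied, as you do. Your rewriting $\mrg[x^\complement]~s~s' = \mrg[x]~s'~s$, the swapped instantiation for ($\Rightarrow$), and the witness $u = \mrg[x]~s~s'$ (justified by the first scene law and idempotence) for ($\Leftarrow$) are all sound.
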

Which implies that checking $x \notin FV(e)$ is decidable, since unrestriction is decidable. 

We can use scene spaces to define programs as predicates $P(s,s')$ with $s$ and $s'$ being states.
This predicate is true if $(s,s')$ is a valid transition of the program $P$, i.e. there is a valid run of program $P$ with initial state $s$ and final state $s'$. Ensuring that a program $P$ holds a property $F$, specified as a boolean expression, is at the core of any verification problem.
We will use the modal box operator to define this concept:
\begin{definition}[Box Operator]
    We say that property $F$ holds for program $P$, $[P]F$, if every final state of a valid run of $P$ holds $F$, that is $[P]F \iff \forall s'\;P(s,s'): F~s'$
\end{definition}

Free and bound variables can be used to reason about programs and properties. We lift the definitions of free and bound variables of a program from~\cite[Definition 8]{platzer_complete_2017}.
These definitions can be mechanised using scenes:
\begin{definition}\label{def:fv-progs}
    $
        FV(P) \defs \bigcup\,\{x\in \basis\, \vert\, \exists\, s_0,s_0',s_1: s_0\approx s_1 \text{ on } x^\complement \land P(s_0, s_0')\,\land 
        (\nexists\, s_1': s_0'\approx s_1'\text{ on } x^\complement \land P(s_1,s_1')) \}
    $\isalink{https://github.com/cisasam/IsabelleFreeBoundVariables/blob/main/Vars_UTP.thy\#L33}
\end{definition}
The free variables of a program are the ones that influence the runs of that program, i.e. the variables whose values influence the values of other variables after the program ends.
\begin{definition}\label{def:bv-progs}
    $
        BV(P) \defs  \bigcup\,\{x\in \basis\, \vert\, \exists\, s_0,s_0': P(s_0, s_0')\land \neg(s_0 \approx s_0' \text{ on } x) \}
    $\isalink{https://github.com/cisasam/IsabelleFreeBoundVariables/blob/main/Vars_UTP.thy\#L39}
\end{definition}
On the other hand, the bound variables of a program are the ones whose value changes through a run of a program. Using these definitions we can demonstrate a simple example:
\begin{example}
    Given $P\triangleq \{x:= x+1\}$, $BV(P)=\{x\}$ and $FV(P)=\{\}$.
    \isalink{https://github.com/cisasam/IsabelleFreeBoundVariables/blob/main/Vars_UTP_Example.thy\#L26}

\end{example}
As $x$ changes value through the execution of the program and there are no variables that influence other variables ($x$ only influences itself).
Furthermore, we can prove that free variables are the ones that influence the output of a program.
\begin{lemma}[Coincidence Effect Lemma for Programs]
    A set of variables $S$ satisfies the coincidence effect property of $P$ if $s_0 \approx s_1$ on $\bigcup V \supseteq S$ and $P(s_0,s_0')$, then there is an $s_1'$ such that $P(s_1,s_1')$ and $s_0'\approx s_1'$ on $V$.
    $FV(P)$ is the smallest such set: $FV(P) \subseteq S$. \isalink{https://github.com/cisasam/IsabelleFreeBoundVariables/blob/main/Vars_UTP.thy\#L257}
\end{lemma}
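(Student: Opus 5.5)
We read the statement as follows. A set $S\subseteq\basis$ has the \emph{coincidence effect property} for $P$ if, for every $V$ with $S\subseteq V\subseteq\basis$, whenever $s_0 \approx s_1$ on $\bigcup V$ and $P(s_0,s_0')$, there is an $s_1'$ with $P(s_1,s_1')$ and $s_0'\approx s_1'$ on $\bigcup V$. The plan has two parts: first show that $\mathcal{D}(FV(P))$ has this property, then show it lies below every $S$ that has it. The whole argument works at the level of basis scenes. It uses the decomposition $\mathcal{D}$, the Scene Complement Union lemma, and the lemma decomposing $\approx$ on $\bigcup X$ into agreement on each $x\in X$.

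\emph{Minimality.} Let $S$ have the property, and suppose for contradiction some $x\in\mathcal{D}(FV(P))$ has $x\notin S$.
\begin{enumerate}
\item By Definition~\ref{def:fv-progs} there are witnesses $s_0,s_0',s_1$ with $s_0\approx s_1$ on $x^\complement$ and $P(s_0,s_0')$, such that no $s_1'$ satisfies both $P(s_1,s_1')$ and $s_0'\approx s_1'$ on $x^\complement$.
\item Take $V \defs \basis\setminus\{x\}$. Since $x\notin S$, we have $S\subseteq V$.
\item By Scene Complement Union, $\bigcup V = (\bigcup\{x\})^\complement = x^\complement$.
\item The property for $V$ then yields exactly an $s_1'$ that the witnesses rule out, which is the contradiction.
\end{enumerate}
Hence $\mathcal{D}(FV(P))\subseteq S$. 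The homomorphism properties of $\mathcal{D}$ then give $FV(P)\le\bigcup S$, which is the claimed $FV(P)\subseteq S$.

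\emph{$FV(P)$ satisfies the property.} Fix $V\supseteq\mathcal{D}(FV(P))$, states with $s_0\approx s_1$ on $\bigcup V$, and a run $P(s_0,s_0')$. Let $x_1,\dots,x_k$ be the basis variables not in $V$; there are finitely many because the basis is finite. Construct intermediate states $t_0=s_0$ and $t_j=\mrg[x_j]~t_{j-1}~s_1$. Each step changes only $x_j$, so $t_{j-1}\approx t_j$ on $x_j^\complement$. Using coverage and the $\bigcup$-decomposition lemma, $t_k$ agrees with $s_1$ on every basis scene, so $t_k=s_1$.

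Proceed by induction on $j$, carrying a run $P(t_j,t_j')$ with $t_j'\approx s_0'$ on $\bigcup V$. The base case is $t_0'=s_0'$. For the step, note $x_j\notin V$, so $x_j\notin FV(P)$. The negated defining condition of $FV$, applied to $t_{j-1},t_{j-1}',t_j$, gives $t_j'$ with $P(t_j,t_j')$ and $t_{j-1}'\approx t_j'$ on $x_j^\complement$. Since $\bigcup V\le x_j^\complement$ and $\approx$ is preserved downward by the scene order, this agreement holds on $\bigcup V$. Transitivity then closes the step.

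The main obstacle is this second part. The delicate steps are showing $t_k=s_1$ via coverage and independence of basis scenes, and confirming that each merge $\mrg[x_j]$ really fixes $x_j^\complement$; for the latter I would use $x_j\lindep x_j^\complement$ together with the scene laws.
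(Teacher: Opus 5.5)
Your proposal is correct and takes essentially the same route as the paper. Minimality comes directly from Definition~\ref{def:fv-progs} instantiated at $V=\basis\setminus\{x\}$ (so that $\bigcup V = x^\complement$), and the coincidence property is proved, as in Platzer's original argument, by induction over a finite chain of intermediate states, each differing from its predecessor in a single basis variable outside $V$; as a small simplification, $t_{j-1}\approx t_j$ on $x_j^\complement$ already follows from the first and third scene laws alone, without appealing to $x_j \lindep x_j^\complement$.
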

This lemma states that if two initial states agree on the free variables of a program, for any final state of one we can find a final state of the other that agrees on any superset of the free variables, including the entire variable space.
The proof of the lemma is split into two parts: proving that $FV(P)$ holds the coincidence lemma and proving that it is the smallest set.
For the first part, we can prove it using induction over some intermediate states (following the original proof~\cite{platzer_complete_2017}).
For the second, it can be proven using \cref{def:fv-progs}. As an example:

\begin{example}
    $\{\}$ satisfies the coincidence effect property for $P\triangleq \{x:=x+1\}$. \isalink{https://github.com/cisasam/IsabelleFreeBoundVariables/blob/main/Vars_UTP_Example.thy\#L54}
\end{example}
As the $FV(P)=\{\}$, it holds per the coincidence effect lemma.

We can also prove the fact that the initial and final state of any valid transition of a program will agree at least in all non-bounded variables. 
\begin{lemma}[Bound Effect Lemma for Programs]\label{beffect}
    The set of bound variables of a program $BV(P)$ is the smallest set that satisfies the bound effect property: \isalink{https://github.com/cisasam/IsabelleFreeBoundVariables/blob/main/Vars_UTP.thy\#L59}
    \begin{align*}
        P(s,s') \implies s \approx s' \text{ on } (\bigcup\nolimits BV(P))^\complement
    \end{align*}
\end{lemma}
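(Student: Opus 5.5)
The proof has two parts: $BV(P)$ satisfies the bound effect property, and it is contained in every set $S \subseteq \basis$ that satisfies it.

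\emph{Part one.} Assume $P(s,s')$. Use the Scene Complement Union lemma to rewrite $(\bigcup BV(P))^\complement$ as $\bigcup(\basis \setminus BV(P))$. By the lemma characterising agreement on $\bigcup X$, it then suffices to show $s \approx s'$ on $x$ for every basis scene $x \in \basis$ with $x \notin BV(P)$. This is immediate from Definition~\ref{def:bv-progs}: if $s \approx s'$ on $x$ failed, the pair $(s,s')$ would witness $x \in BV(P)$. So part one is a contraposition of the definition, once the complement is turned into a union.

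\emph{Part two.} Let $S \subseteq \basis$ satisfy $P(s,s') \implies s \approx s'$ on $(\bigcup S)^\complement$; we show $BV(P) \subseteq S$. Take $x \in BV(P)$, so there are $s_0, s_0'$ with $P(s_0,s_0')$ and $\neg(s_0 \approx s_0' \text{ on } x)$. Suppose for contradiction that $x \notin S$, so $x \in \basis \setminus S$.
\begin{itemize}
\item By Scene Complement Union, $(\bigcup S)^\complement = \bigcup(\basis \setminus S)$.
\item The hypothesis on $S$ then gives $s_0 \approx s_0'$ on $\bigcup(\basis\setminus S)$.
\item The $\bigcup$-agreement lemma yields $s_0 \approx s_0'$ on $x$, a contradiction.
\end{itemize}
Hence $x \in S$.

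\emph{Main obstacle.} The difficulty is bookkeeping, not mathematics. $BV(P)$ is defined as a union, a scene, but is used as a set of basis scenes in the minimality statement. I would work with the underlying set $\{x \in \basis \mid \dots\}$ and apply $\bigcup$ only where the statement requires it. The side conditions of the lemmas I invoke also need checking: $\basis\setminus S \subseteq \mathfrak{S}_\basis$, which holds because basis scenes lie in the scene space, and the finiteness needed for $\bigcup$, which holds because $\basis$ is finite.

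If the comparison is meant in the order $\le$ between the unions rather than as set inclusion, I would transfer the result via the decomposition homomorphism $\mathcal{D}$ and the equivalence $(A \subseteq B) \iff \mathcal{D}(A) \subseteq \mathcal{D}(B)$. This uses $\mathcal{D}(\bigcup X) = X$ for $X \subseteq \basis$ without $\bot$, which follows from uniqueness of the decomposition.
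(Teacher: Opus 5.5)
Your proof is correct and follows essentially the paper's route. The paper only says the lemma follows from Definition~\ref{def:bv-progs}, and your argument is that unfolding: contraposition of the definition gives the bound effect, and the witnessing pair $(s_0,s_0')$ gives minimality, with Scene Complement Union and the $\bigcup$-agreement lemma converting $(\bigcup S)^\complement$ into pointwise agreement on the basis scenes outside $S$ (and your reading of $BV(P)$ as a set of basis scenes, with minimality as inclusion among subsets of $\basis$, is the intended one).
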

Which can be proven using \cref{def:bv-progs}.
As an example:
\begin{example}
    The set $\{x\}$ satisfies the bound effect property for $P\triangleq \{x:=x+1\}$.
    \isalink{https://github.com/cisasam/IsabelleFreeBoundVariables/blob/main/Vars_UTP_Example.thy\#L108}
\end{example}
Which holds as $\{x\}$ corresponds to the bound variables of $P$.

Using these two lemmas and the definitions of bound and free variables we can prove foundational properties that are widely applied in program verification.
For example:
\begin{lemma}
    Given two programs $P$ and $Q$, if $BV(P)\cap BV(Q) = \emptyset$, $FV(P)\cap BV(Q) = \emptyset$ and $BV(P)\cap FV(Q) = \emptyset$, then $P;Q = Q;P$.
    \isalink{https://github.com/cisasam/IsabelleFreeBoundVariables/blob/main/Vars_UTP.thy\#L286}
\end{lemma}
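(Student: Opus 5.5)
We prove the two inclusions $P;Q \subseteq Q;P$ and $Q;P \subseteq P;Q$ of the relational composition $(P;Q)(s,s'') \iff \exists s'.\, P(s,s') \land Q(s',s'')$. The hypotheses are symmetric in $P$ and $Q$, so it suffices to prove the first inclusion. We work throughout inside the scene space $\mathfrak{S}_\basis$. For a set $X \subseteq \basis$ we write $X^c \defs \basis \setminus X$, and by the Scene Complement Union lemma $\bigcup X^c = (\bigcup X)^\complement$. The hypotheses then give two set inclusions: $FV(Q) \subseteq BV(P)^c$ and $FV(P) \subseteq BV(Q)^c$.

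Assume $P(s_0,s_1)$ and $Q(s_1,s_2)$; we must find $t$ with $Q(s_0,t)$ and $P(t,s_2)$.

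\textbf{Constructing $t$.} By the Bound Effect Lemma for $P$, $s_0 \approx s_1$ on $(\bigcup BV(P))^\complement = \bigcup BV(P)^c$, and this set contains $FV(Q)$. We apply the Coincidence Effect Lemma for $Q$ with $V = BV(P)^c$ to the run $Q(s_1,s_2)$, using that equivalence is symmetric. This yields $t$ with $Q(s_0,t)$ and $t \approx s_2$ on $\bigcup BV(P)^c$.

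\textbf{Running $P$ from $t$.} By the Bound Effect Lemma for $Q$, $s_0 \approx t$ on $\bigcup BV(Q)^c$, and this set contains $FV(P)$. We apply the Coincidence Effect Lemma for $P$ with $V = BV(Q)^c$ to the run $P(s_0,s_1)$. This yields $u$ with $P(t,u)$ and $u \approx s_1$ on $\bigcup BV(Q)^c$.

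\textbf{Showing $u = s_2$.} This gives $P(t,s_2)$ and finishes the inclusion. We establish agreement on two regions.
\begin{itemize}
\item On $\bigcup BV(P)^c$: the Bound Effect Lemma for $P$ applied to $P(t,u)$ gives $t \approx u$, and $t \approx s_2$ from the construction of $t$. By transitivity, $u \approx s_2$.
\item On $\bigcup BV(Q)^c$: we have $u \approx s_1$, and $s_1 \approx s_2$ by the Bound Effect Lemma for $Q$. By transitivity, $u \approx s_2$.
\end{itemize}
Since $BV(P) \cap BV(Q) = \emptyset$, we have $BV(P)^c \cup BV(Q)^c = \basis$. The generalised union lemma for $\bigcup$ (equivalence on $\bigcup X$ iff equivalence on every $x \in X$) therefore gives $u \approx s_2$ on $\bigcup \basis = \top$. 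Finally, $\mrg[\top]~u~s_2 = s_2$ by the definition of $\top$, and equivalence on $\top$ means $\mrg[\top]~u~s_2 = u$; hence $u = s_2$.

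\textbf{Main obstacle.} We expect the logic above to be straightforward. The delicate part is the scene-space bookkeeping: moving between set-level facts about subsets of $\basis$ (complements, unions, inclusions) and scene-level facts about $\bigcup$, $^\complement$ and $\approx$.
\begin{itemize}
\item We use the decomposition homomorphism $\mathcal{D}$ to identify $\bigcup BV(P)^c \sqcup \bigcup BV(Q)^c$ with $\top$.
\item We use Theorem~\ref{thm:scene_union_eq}, together with compatibility within the scene space, to split and recombine equivalences.
\item We check that the Coincidence Effect Lemma is applicable with $V$ a superset of $FV$. Here we must confirm that $FV(Q)$, defined as a scene, decomposes to a subset of $BV(P)^c$. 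We would derive this from the disjointness hypothesis via $\mathcal{D}(A \sqcap B) = \mathcal{D}(A) \cap \mathcal{D}(B)$ and injectivity of $\mathcal{D}$.
\end{itemize}
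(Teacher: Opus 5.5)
Your proof is correct and is essentially the paper's argument: the paper derives this lemma from the Coincidence Effect and Bound Effect lemmas for programs. Your construction is a rigorous version of its informal justification: coincidence for $Q$ gives $t$ with $Q(s_0,t)$, coincidence for $P$ gives $u$ with $P(t,u)$, and you then glue $u = s_2$ on $\bigcup(\basis\setminus BV(P))$ and $\bigcup(\basis\setminus BV(Q))$, which cover $\basis$ because $BV(P)\cap BV(Q)=\emptyset$; the symmetry of the hypotheses correctly handles the reverse inclusion.
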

As the free variables are the ones that influence the runs of the programs and the bound variables are the ones that change, if these two sets are pairwise disjoint, the changes that $P$ makes do not influence $Q$ and vice-versa.
Furthermore, since the bound variables are also disjoint, $Q$ does not overwrite the changes that $P$ makes and vice-versa.

In order to further showcase the power of scenes we define a simple composition operator and prove some properties about it.
This operator, $P\parallel Q$, merges over the bound variables of the different processes in the parallel composition.
We are going to use a simplified version of the one defined in~\cite{brieger2025completedynamiclogiccommunicating} with no trace or communication.
We will prove that it holds two desirable properties: parallel injection and parallel decomposition

First, parallel injection states that if $[P]F$ and $Q$ does not-interfere (i.e. runs of this program do not affect the truth value of this program), then $[P\parallel Q] F$.
We can can characterize non-interference using free and bound variables:
\begin{definition}[Non-interference]
    Program $P$ does not interfere with formula $F$ if $BV(P)\cap FV(F)=\emptyset$
    \isalink{https://github.com/cisasam/IsabelleFreeBoundVariables/blob/main/Vars_UTP.thy\#L538}
\end{definition}
The set $BV(P)$ contains the variables that can be affected by running $P$ while $FV(F)$ contains the variables that influence the value of $F$, so if these two sets are disjointed, we achieve what we want, runs of $P$ will not influence the value of $F$.

Second, parallel decomposition states the fact that if $P$ does not interfere with $F_Q$ and $Q$ does not interfere with $F_P$, then $[P]F_P$ and $[Q]F_Q$ are sufficient to state $[P\parallel Q](F_P\land F_Q)$.
Parallel decomposition is a powerful property to have, as it allows us to decompose safety properties of the parallel composition into properties that hold for the partial programs and decompose a proof of safety for the composition into one of the partial programs.

We define the parallel composition as follows:
\begin{definition}[Simple Parallel Operator]\label{def:parallel}
    The parallel operator, $P\parallel Q$ is a parallel-by-merge operator~\cite{Hoare&98} that merges $P$ and $Q$ over the bound variables of $P$ \isalink{https://github.com/cisasam/IsabelleFreeBoundVariables/blob/main/Vars_UTP.thy\#L401}
    \begin{align*}
        (P\parallel Q) (s,s') \iff  \exists~s_P, s_Q.~s'=\mrgon{s_Q}{s_P}{\bigcup BV(P)}\land P(s,s_P)\land Q(s,s_Q)
    \end{align*}
\end{definition}
Here, $\mrgon{s_1}{s_2}{A}$ is notation for $\mrg[A]~P~Q$.
This parallel operator splits the state into two, the variables bounded by $P$ and the rest. 
It runs both programs independently and then it merges the state, taking the values of the bound variables of $P$ from $P$ itself and the rest from $Q$.

For example, given variables $x,y,z$ and the program $P\triangleq\{x:=z\parallel y:=z\}$ that assigns $x$ and $y$ to $z$ in parallel, it will have as valid runs any pair of states $(s,s')$ where $s'[x]=s[z]$ and $s'[y]=s[z]$.

Our parallel operator commutes given that the programs do not share bound variables:
\begin{lemma}[Commutativity of $\parallel$]
    Given that $BV(P)\cap BV(Q) =\emptyset$,  $P \parallel Q = Q\parallel P$ 
    \isalink{https://github.com/cisasam/IsabelleFreeBoundVariables/blob/main/Vars_UTP.thy\#L632}
\end{lemma}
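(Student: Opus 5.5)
The plan is to prove the two relations equal pointwise. I will show that for any $s$, $s_P$, $s_Q$ with $P(s,s_P)$ and $Q(s,s_Q)$, the two merged states coincide:
$$\mrgon{s_Q}{s_P}{A} \;=\; \mrgon{s_P}{s_Q}{B}, \qquad A \defs \textstyle\bigcup BV(P),\; B \defs \textstyle\bigcup BV(Q).$$
Both directions of $P\parallel Q = Q \parallel P$ then follow by unfolding Definition~\ref{def:parallel} and reusing the same witnesses $s_P, s_Q$. The only information about $s_P$ and $s_Q$ that I will use is the Bound Effect Lemma (Lemma~\ref{beffect}), which gives $s \approx s_P$ on $A^\complement$ and $s \approx s_Q$ on $B^\complement$.

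To prove the displayed equality, I reduce it to agreement on every basis scene. By coverage, $\bigcup \basis = \top$. Applying the generalised union lemma ($s_1 \approx s_2$ on $\bigcup X$ iff they agree on each $x \in X$) with $X = \basis$, it suffices to show that the two states agree on each $x \in \basis$. Agreement on $\top$ is then equality, since $\mrg[\top]~s_1~s_2 = s_2$.

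For this I need two small auxiliary facts about merges within the scene space. First, if $x \le C$ then $\mrg[C]~t~u \approx u$ on $x$. Second, if $x \le C^\complement$ then $\mrg[C]~t~u \approx t$ on $x$. Both should follow from the scene-order definition together with the merge laws. I also use the decomposition homomorphism $\mathcal{D}$: for $x \in \basis$, either $x \in \mathcal{D}(C)$, giving $x \le C$, or $x \in \mathcal{D}(C^\complement)$, giving $x \le C^\complement$. Applied to $C = A$, this means $x \le A$ exactly when $x \in BV(P)$, and otherwise $x \le A^\complement$.

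The proof then splits into three cases on $x$.
\begin{itemize}
\item If $x \in BV(P)$, the hypothesis $BV(P)\cap BV(Q)=\emptyset$ gives $x \le B^\complement$. The left-hand state agrees with $s_P$ on $x$, and so does the right-hand state.
\item If $x \in BV(Q)$, the situation is symmetric and both states agree with $s_Q$ on $x$.
\item If $x$ lies in neither set, then $x \le A^\complement$ and $x \le B^\complement$. The left-hand state agrees with $s_Q$ on $x$ and the right-hand state with $s_P$. Since the scene order preserves state equivalence, the bound effect facts restrict to $x$, giving $s_P \approx s \approx s_Q$ on $x$. Symmetry and transitivity of $\approx$ finish the case.
\end{itemize}

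I expect the main obstacle to be the auxiliary merge/order lemmas. In particular, deriving $\mrg[C]~t~u \approx t$ on $x$ from $x \le C^\complement$ requires using the equivalence $F \lindep G^\complement \leftrightarrow F \le G$ from the Scene Laws to convert the order into independence of $x$ and $C$, and then rewriting with the independence law. A degenerate possibility to handle along the way is $\bot \in \basis$, which should be harmless because every pair of states agrees on $\bot$.
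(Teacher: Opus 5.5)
Your proposal is correct and takes the same route as the paper: unfold Definition~\ref{def:parallel}, reuse the witnesses $s_P, s_Q$, and combine the bound effect lemma for both programs with $BV(P)\cap BV(Q)=\emptyset$ to show the two merged states coincide; your basis-wise three-case split just spells out what the paper compresses into one sentence. One simplification: your second auxiliary fact does not need the detour through independence, since instantiating the definition of $x \le C^\complement$ with $s_1 := u$, $s_2 := s_3 := t$ and using $\mrg[x]~t~t = t$ gives $\mrg[x]~(\mrg[C]~t~u)~t = \mrg[C]~t~u$ directly.
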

This holds by the bound effect lemma (\cref{beffect}) and the definition of the parallel operator.
As $P$ and $Q$ are merged over the bound variables of $P$, and as $Q$ can only affect variables in $BV(Q)$, if these two sets are disjoint the operator commutes.

In order to prove both parallel injection and decomposition we need to discuss the final states of the programs that make up the parallel composition.
We can mechanise them using scenes:
\begin{definition}[Partial Left and Partial Right]
    For a valid run of the parallel composition $P\parallel Q\,(s,s')$, we define a partial left state, $s_P$ as any state that holds
$P(s,s_P)\land s' \approx s \text{ on } \bigcup\nolimits BV(P)$ and a partial right state $s_Q$ as any state that holds $Q(s,s_Q)\land s' \approx s \text{ on } \left(\bigcup\nolimits BV(P)\right)^\complement$
\isalink{https://github.com/cisasam/IsabelleFreeBoundVariables/blob/main/Vars_UTP.thy\#L405}
\end{definition}

With these definitions we can reason about the final states of the parallel composition using the final states of the programs that take part in the composition.
Specifically, we can show the following result:

\begin{lemma}
    For all $s$, $s'$ and $s_P$ s.t. $(P \parallel Q) (s, s')$ and $s_P$ is a partial left, we have: \isalink{https://github.com/cisasam/IsabelleFreeBoundVariables/blob/main/Vars_UTP.thy\#L524}
    \begin{align*}
        s'\approx s_P \text{ on } \left(\bigcup BV(Q)\right)^\complement
    \end{align*}
\end{lemma}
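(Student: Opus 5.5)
The plan is to cut the target scene $\left(\bigcup BV(Q)\right)^\complement$ into the part inside $\bigcup BV(P)$ and the part outside it, and prove agreement between $s'$ and $s_P$ on each part separately. Write $B \defs \bigcup BV(P)$ and $C \defs \bigcup BV(Q)$. Both are elements of the scene space $\mathfrak{S}_\mathcal{B}$, and so are their complements, by the Scene Complement Union lemma. I read the defining condition of a partial left state as $s' \approx s_P$ on $B$, which I take to be the intended statement. That gives agreement on the first part immediately.

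For the second part, I unfold Definition~\ref{def:parallel} to obtain witnesses $s_P^0, s_Q$ with $P(s,s_P^0)$, $Q(s,s_Q)$ and $s' = \mrg[B]~s_Q~s_P^0$. The first fact I need is $s' \approx s_Q$ on $B^\complement$, which is a direct computation with the scene laws. By the definition of complement, $\mrg[B^\complement]~s'~s_Q = \mrg[B]~s_Q~(\mrg[B]~s_Q~s_P^0)$. The second merge law reduces this to $\mrg[B]~s_Q~s_P^0 = s'$.

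Next I apply the Bound Effect Lemma (Lemma~\ref{beffect}) twice:
\begin{itemize}
\item from $Q(s,s_Q)$ it gives $s \approx s_Q$ on $C^\complement$;
\item from $P(s,s_P)$ it gives $s \approx s_P$ on $B^\complement$.
\end{itemize}
The scene order preserves state equivalence, and $B^\complement \sqcap C^\complement$ lies below both $B^\complement$ and $C^\complement$. So all of $s'$, $s_Q$, $s$ and $s_P$ agree on $B^\complement \sqcap C^\complement$, and transitivity of $\sceneq{}$ gives $s' \approx s_P$ there.

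To combine the two parts, I work at the level of basis scenes rather than manipulating $\sqcap$ algebraically. By the lemma characterising equivalence on $\bigcup X$, applied to the decomposition $\mathcal{D}(C^\complement)$, it suffices to show $s' \approx s_P$ on every basis scene $x \in \mathcal{D}(C^\complement)$. Since $\mathcal{D}(B^\complement) = \basis \setminus \mathcal{D}(B) \setminus \{\bot\}$, each such $x$ lies either in $\mathcal{D}(B)$ or in $\mathcal{D}(B^\complement)$.
\begin{itemize}
\item If $x \in \mathcal{D}(B)$, then $x \le B$ and agreement follows from the partial left property.
\item If $x \in \mathcal{D}(B^\complement)$, then $x \le B^\complement$ and also $x \le C^\complement$. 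The equivalences $s' \approx s_Q$ on $B^\complement$, $s_Q \approx s$ on $C^\complement$ and $s \approx s_P$ on $B^\complement$ each restrict to $x$, and chaining them gives $s' \approx s_P$ on $x$.
\end{itemize}
Using this case split avoids needing any distributivity law for $\sqcap$ explicitly.

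I expect the main obstacle to be the bookkeeping that transfers facts between scene-level subscene relations and the decomposition sets. In particular, I need $x \in \mathcal{D}(A) \implies x \le A$ for basis scenes $x$. My first attempt would derive this from $\bigcup \mathcal{D}(A) = A$ together with the $\bigcup$-equivalence lemma, used in the form "equivalence on $A$ implies equivalence on each member of $\mathcal{D}(A)$". That form is exactly what the restriction steps above require. The rest is routine rewriting with the scene laws.
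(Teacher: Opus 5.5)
Your proof is correct and is essentially the paper's argument. You use the same split of $(\bigcup BV(Q))^\complement$ into the part inside $\bigcup BV(P)$, handled by the partial-left property (which the paper's proof also reads as $s' \approx s_P$ on $\bigcup BV(P)$), and the part outside it, handled by chaining $s' \approx s_Q \approx s \approx s_P$ with the bound effect lemma applied to both programs.

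The differences are only bookkeeping. You recombine the two parts pointwise over the basis scenes in $\mathcal{D}(C^\complement)$, whereas the paper uses the Boolean identity $(B \sqcap C^\complement) \sqcup (B^\complement \sqcap C^\complement) = C^\complement$. You also derive $s' \approx s_Q$ on $B^\complement$ directly from the merge laws rather than assuming a partial right state.
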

First by the definition of the parallel operator, $s'\approx s_P$ on $BV(P)$ so the states also agree on $BV(P)\cap (BV(Q))^\complement$.
Second, $s \approx s_P$ on $(BV(P))^\complement$ by the bound effect lemma, and for all partial right states $s_Q$ we can make the same argument $s \approx s_Q$ on $(BV(Q))^\complement$, which means that $s_P\approx s_Q$ on $(BV(P))^\complement \cap (BV(Q))^\complement$.
Finally, $s_Q\approx s'$ on $(BV(P))^\complement$ by the definition of the parallel operator, so $s_P\approx s'$ on $(BV(P))^\complement \cap (BV(Q))^\complement$.
Putting everything together we have $s_P\approx s'$ on $(BV(P)\cap (BV(Q))^\complement)\cup ((BV(P))^\complement \cap (BV(Q))^\complement) = (BV(Q))^\complement$.
And as a direct consequence we have
\begin{lemma}[Non-interference Preserves Safety]
    For any pair of states belonging to the parallel composition $(P \parallel Q) (s, s')$ with partial left $s_P$, given that $Q$ does not interfere with formula $F$, $s'\llbracket F\rrbracket$ holds if and only if $s_P\llbracket F\rrbracket$.
    \isalink{https://github.com/cisasam/IsabelleFreeBoundVariables/blob/main/Vars_UTP.thy\#L573}
\end{lemma}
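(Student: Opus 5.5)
The plan is to combine the preceding lemma with the coincidence effect lemma for expressions. Fix $s$, $s'$ with $(P\parallel Q)(s,s')$ and a partial left state $s_P$. The preceding lemma gives $s' \approx s_P$ on $\left(\bigcup BV(Q)\right)^\complement$. It then suffices to show $s' \approx s_P$ on $\bigcup FV(F)$: the coincidence effect lemma for terms and formulas then yields $F(s') = F(s_P)$. Both directions of the biconditional follow at once, since the evaluations are equal.

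The key step is to turn the non-interference hypothesis $BV(Q)\cap FV(F)=\emptyset$ into the scene inequality $\bigcup FV(F) \le \left(\bigcup BV(Q)\right)^\complement$. Both $FV(F)$ and $BV(Q)$ are, by construction, subsets of the basis $\basis$. Disjointness gives $FV(F) \subseteq \basis \setminus BV(Q)$. I would apply the Scene Complement Union lemma to get $\bigcup(\basis\setminus BV(Q)) = (\bigcup BV(Q))^\complement$. Then, by monotonicity of $\bigcup$ under set inclusion, $\bigcup FV(F) \le (\bigcup BV(Q))^\complement$. This monotonicity comes from the homomorphism property of the decomposition $\mathcal{D}$, namely $(A\subseteq B) \iff \mathcal{D}(A)\subseteq\mathcal{D}(B)$, or from $\bigcup$ being the supremum in the complete Boolean algebra.

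With this inequality, I would use the fact stated earlier that the scene order preserves state equivalence: if $s_1 \sceneq{G} s_2$ and $F'\le G$, then $s_1 \sceneq{F'} s_2$. This takes us from agreement on $(\bigcup BV(Q))^\complement$ to agreement on $\bigcup FV(F)$.

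An alternative that avoids the order is the generalised union lemma $s_1\approx s_2$ on $\bigcup X \iff \forall x\in X.\ s_1\approx s_2$ on $x$. Applying it with $X = \basis\setminus BV(Q)$ gives agreement on each basis scene outside $BV(Q)$, hence on each element of $FV(F)$, and then again on $\bigcup FV(F)$.

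I expect the main obstacle to be this bridging step between set-level disjointness and the scene-level inclusion. It needs a few ingredients: that $FV(F)$ and $BV(Q)$ lie inside $\basis$, which is evident from their definitions as sets of basis scenes; that $\bot$ causes no trouble; and that the scenes involved belong to the scene space, so that compatibility and the Boolean-algebra laws apply. I would take the elementwise route via the generalised union lemma first, since it relies only on pointwise reasoning over basis scenes.
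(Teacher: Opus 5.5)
Your proposal is correct and follows the paper's argument: take the preceding lemma's agreement $s' \approx s_P$ on $\left(\bigcup BV(Q)\right)^\complement$, use non-interference to obtain $\bigcup FV(F) \le \left(\bigcup BV(Q)\right)^\complement$, and conclude by the coincidence lemma. You also spell out the bridging step from set disjointness to scene inclusion, via Scene Complement Union with monotonicity or elementwise via the generalised union lemma, which the paper leaves implicit.
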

As due to non-interference $\bigcup\nolimits FV(F)\leq \left(\bigcup\nolimits BV(Q)\right)^\complement$ and the lemma holds by coincidence.

Finally, using the box operator and the previous lemma we can obtain the parallel injection property:
\begin{theorem}[Parallel Injection]
    Given that $Q$ does not interfere with formula $F$, $[P]F\implies [P\parallel Q]F$
    \isalink{https://github.com/cisasam/IsabelleFreeBoundVariables/blob/main/Vars_UTP.thy\#L602}
\end{theorem}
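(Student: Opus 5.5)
We unfold the box operator and reduce the claim to the preceding lemma (Non-interference Preserves Safety). Fix an initial state $s$ and suppose $[P]F$ holds at $s$, i.e.\ $F~s_P$ for every $s_P$ with $P(s,s_P)$. To show $[P\parallel Q]F$, we take an arbitrary $s'$ with $(P\parallel Q)(s,s')$ and must establish $F~s'$.

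First, by Definition~\ref{def:parallel}, $(P\parallel Q)(s,s')$ provides witnesses $s_P, s_Q$ with $P(s,s_P)$, $Q(s,s_Q)$ and $s' = \mrgon{s_Q}{s_P}{\bigcup BV(P)}$. We check that this $s_P$ is a partial left state: it satisfies $P(s,s_P)$ by construction, and the required agreement on $\bigcup BV(P)$ comes from the merge. The merge takes the $\bigcup BV(P)$ region from $s_P$, which follows from the scene laws in Definition~\ref{def:scene}, reading the agreement as between $s'$ and $s_P$ as the subsequent argument uses it. Any mismatch with the literal wording of the partial-left definition is a bookkeeping issue. We resolve it by using whichever form the preceding lemma actually consumes, namely the one produced by the merge.

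Second, since $P(s,s_P)$, the hypothesis $[P]F$ gives $F~s_P$. Third, we apply the Non-interference Preserves Safety lemma, using that $Q$ does not interfere with $F$, to the run $(P\parallel Q)(s,s')$ with partial left $s_P$. This gives $F~s' \iff F~s_P$, hence $F~s'$. Since $s'$ was arbitrary, $[P\parallel Q]F$ follows.

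The main obstacle is the first step: producing a genuine partial left state from the existential witnesses of the parallel operator, i.e.\ checking that the merge equation yields exactly the agreement $s'\approx s_P$ on $\bigcup BV(P)$ demanded by the lemma. Everything else is direct instantiation of earlier results.
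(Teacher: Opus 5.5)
Your proposal is correct and follows the paper's own route: unfold the box operator, take the witnesses $s_P, s_Q$ from Definition~\ref{def:parallel}, and observe that $s_P$ is a partial left state. That last point holds because the first scene law gives $\mrg[A]~s'~s_P = s'$ for $A = \bigcup BV(P)$, i.e.\ $s' \approx s_P$ on $A$. You then conclude $F~s'$ from $F~s_P$ via Non-interference Preserves Safety. Reading the partial-left condition as $s' \approx s_P$ (rather than the literal $s' \approx s$) is the right fix, since the paper's proof of the preceding lemma relies on exactly that agreement.
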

The power of this theorem is tied to the fact that it allows us to verify properties about programs in a parallel composition without having to use the definition of the operator:
\begin{example}
    Let $P\triangleq \{x:=y+1\}$, $Q\triangleq \{z:=y+1\}$ and $F\triangleq x>y$. $Q$ does not interfere with $F$ as $BV(Q)=\{z\}$ and $FV(F)=\{x,y\}$ and $[P]F$ trivially, then $[P\parallel Q]F$.
\end{example}

Finally, with the previous lemma and commutativity:
\begin{theorem}[Parallel Decomposition]
    Given that $P$ does not interfere with formula $F_Q$ and $Q$ does not interfere with formula $F_P$, $[P]F_P\land[Q]F_Q\implies [P\parallel Q](F_P\land F_Q)$
    \isalink{https://github.com/cisasam/IsabelleFreeBoundVariables/blob/main/Vars_UTP.thy\#L716}
\end{theorem}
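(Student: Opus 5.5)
The plan is to unfold the box operator and reduce both conjuncts to the two single-program box statements through the Non-interference Preserves Safety lemma, handling one side directly and the other by commutativity. Fix $s$ and $s'$ with $(P\parallel Q)(s,s')$. We must show $F_P(s')$ and $F_Q(s')$.

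\emph{First conjunct.} By the definition of the parallel operator (Definition~\ref{def:parallel}) there are witnesses $s_P, s_Q$ with $s' = \mrgon{s_Q}{s_P}{\bigcup BV(P)}$, $P(s,s_P)$ and $Q(s,s_Q)$. This $s_P$ is a partial left state: the merge takes the $BV(P)$ region from $s_P$, so $s' \approx s_P$ on $\bigcup BV(P)$ follows from the scene laws. Since $[P]F_P$ and $P(s,s_P)$, we get $F_P(s_P)$. Because $Q$ does not interfere with $F_P$, the Non-interference Preserves Safety lemma gives $F_P(s') \iff F_P(s_P)$, hence $F_P(s')$. This is exactly the argument of Parallel Injection, which we may simply invoke to obtain $[P\parallel Q]F_P$.

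\emph{Second conjunct.} Here I would use the Commutativity of $\parallel$ lemma to rewrite $P\parallel Q$ as $Q\parallel P$. Then Parallel Injection, with the roles of $P$ and $Q$ swapped (using $[Q]F_Q$ and the hypothesis that $P$ does not interfere with $F_Q$), yields $[Q\parallel P]F_Q$, hence $[P\parallel Q]F_Q$. Finally, the box operator distributes over conjunction, since it is a universal quantification over final states, which gives $[P\parallel Q](F_P\land F_Q)$.

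The main obstacle is that commutativity needs $BV(P)\cap BV(Q)=\emptyset$, which is not among the stated hypotheses. If it is not implicitly assumed, I would avoid commutativity and argue directly. Take a partial right state $s_Q$, which agrees with $s'$ off $\bigcup BV(P)$. Then show $s'\approx s_Q$ on $\bigcup FV(F_Q)$ by combining three facts: non-interference gives $FV(F_Q)\cap BV(P)=\emptyset$; the Lemma on the union of a set of scenes lets us decompose the equivalence into basis scenes; and each $x \in FV(F_Q)$ lies in $(\bigcup BV(P))^\complement$, by the Scene Complement Union lemma and the decomposition homomorphism. The Coincidence Effect lemma then gives $F_Q(s') \iff F_Q(s_Q)$. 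This direct route is in fact simpler and symmetric, so I would attempt it first if the commutativity precondition fails.
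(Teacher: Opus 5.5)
Your primary route (Parallel Injection for $F_P$, then Commutativity of $\parallel$ plus Parallel Injection with $P$ and $Q$ swapped for $F_Q$, then distributing the box over $\land$) is exactly the paper's one-line proof, ``with the previous lemma and commutativity''. You are right that this silently needs $BV(P)\cap BV(Q)=\emptyset$, which the theorem does not assume. Take $P \defs x:=1$ and $Q \defs x:=2$: the two orders of composition differ, yet the decomposition theorem still applies (and holds) whenever neither $F_P$ nor $F_Q$ depends on $x$. So the paper's sketch, as written, covers only the disjoint case, or else needs disjointness as an extra hypothesis.

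Your fallback for the $F_Q$ conjunct closes this gap and is the better proof. The witness $s_Q$ agrees with $s'$ on $(\bigcup BV(P))^\complement$ by the scene laws alone: complementation swaps the merge arguments, and then $\mrg~x~(\mrg~y~z)=\mrg~x~z$ applies. Non-interference puts each basis scene of $FV(F_Q)$ in $\basis\setminus BV(P)$. Scene Complement Union, the union lemma and coincidence then finish the argument, with no appeal to the bound effect lemma. Parallel Injection itself never uses disjointness (its key lemma is $s'\approx s_P$ on $(\bigcup BV(Q))^\complement$), so your combination proves the theorem exactly as stated. You should therefore lead with it rather than present it as a contingency.

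One correction: the two halves are not symmetric. For $F_P$, $s'$ agrees with $s_P$ by construction only on $\bigcup BV(P)$, and extending this to $(\bigcup BV(Q))^\complement$ needs the bound effect lemma for both programs. For $F_Q$ nothing like this is needed, because the merge takes everything outside $\bigcup BV(P)$ from $s_Q$, which is exactly where non-interference confines $FV(F_Q)$.
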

With this last theorem, we can conclude that scenes, via the definitions of free and bound variables, allow us to reason about parallel programs (defined through \cref{def:parallel}) compositionally.
Further examples on how to use free and bound variables to enable compositional reasoning can be seen in the proofs of soundness of $\textsf{dL}_\textnormal{CHP}$~\cite{brieger2025completedynamiclogiccommunicating} and \textsf{d}$\mathcal{L}$~\cite{platzer_complete_2017}.


\section{Related Work}
\label{sec:related}
We previously surveyed different approaches to state space modelling in proof assistants~\cite[Section~2.4]{Foster2020-IsabelleUTP}, using Schirmer and Wenzel's categories as a framework~\cite{Schirmer2009}. They identify four approaches employed in developing verification tools: state as a (1) function; (2) tuple; (3) record; and (4) abstract type. Lenses unify these different approaches by providing an algebraic account of state that does not depend on a particular model~\cite{Foster2020-IsabelleUTP}. In particular, Isabelle/UTP is a confluence of ideas from Oliveira~\cite{Oliveira07} et al. and Feliachi et al.~\cite{Feliachi2010} in their previous respective mechanisations of the UTP. Scenes build on this by allowing lenses to be collected in sets, and without any loss of generality in our previous work. Crucially, scenes avoid the need to formalise (1) names for variables and (2) a value universe for the state model, which is necessary for a shallow embedding.

The term ``lens'' originates in the work of Nate Foster~\cite{Foster07}, and lenses are now ubiquitous in functional programming. The idea of using algebra to provide a foundation for state space modelling is well-known in the literature, going back to the work of Tarski~\cite{Tarski71} and Oles~\cite{Oles82}. Notably, Back and von Wright~\cite{Back1998} use essentially the same algebraic structure~\cite{Foster2020-IsabelleUTP} to lenses to model variables in their refinement calculus. Hayes~\cite{Hayes2016a,Colvin2017} et al. have also employed similar algebraic properties, including unrestriction, in the context of rely-guarantee reasoning for shared variable concurrency. Also in this line, Dongol et al.~\cite{Dongol19} use a functional state space and mechanised algebra to implement cylindrical algebra in Isabelle/HOL, which is then extended for programs and used to model concepts like local variables. Similarly, Platzer uses the algebraic properties of variables to formulate properties of hybrid programs for his uniform substitution calculus~\cite{platzer_complete_2017}.  Our work on lenses and scenes allows these ideas to be generalised and put to work in the context of verification tools in proof assistants. 

\section{Conclusions}
\label{sec:concl}
We have introduced scenes as a meta-logical algebra for reasoning about mutable state. Though based on an elementary algebraic foundation, they allow us to model sets of variables in the context of a shallow embedding, and thus allow us to support laws of programming that depend on the program frame. These results on scenes form part of our \textsf{Optics} library~\cite{Optics-AFP} for Isabelle/HOL, and our Isabelle/UTP relational program model~\cite{Foster2020-IsabelleUTP}.

In future work, we are using our mechanisation of $BV$ and $FV$ as a foundation to implement Brieger's $\textsf{dL}_{\textnormal{CHP}}$~\cite{BriegerMP23} in Isabelle/HOL. In parallel with this, we are currently developing automation support for generation of scene spaces in Isabelle/HOL. For example, while calculation of $x \notin FV(e)$ can be automated through unrestriction, calculating $FV(e)$ itself requires further proof methods. In a different axis, we are also currently developing a port of the Optics library in Lean, so that our results can have more reach.

\bibliography{ref}
\end{document}